\newcommand{\Dmn}{\mbox{Dmn}}
\newcommand{\aug}{\mbox{aug}}
\theoremstyle{plain}
\theoremstyle{plain}
\newtheorem{theorem}{Theorem}[section]
\newtheorem{corollary}[theorem]{Corollary}
\newtheorem{definition}[theorem]{Definition}
\newtheorem{example}[theorem]{Example}
\newtheorem{remark}[theorem]{Remark}
\begin{document}

\title{A Nominal Association Matrix with Feature Selection for Categorical Data}

\author{Wenxue Huang,
        Yong Shi,
        Xiaogang Wang}

\begin{abstract}
We introduce an informative probabilistic association matrix to measure a proportional local-to-global association of categories of one variable with another categorical variable.
 Towards a probability  based proportional prediction, the association matrix gives rise to the expected  predictive distribution of the first and second types of errors for a multinomial response variable. In addition,  the normalization of the diagonal of the matrix gives rise to an association vector, which provides the expected category accuracy lift rate distribution.
 A general scheme of global-to-global association measures   with flexible weight vectors is further
 developed from the matrix.   A hierarchy of equivalence relations defined by the association matrix and vector is shown. Applications to financial and survey data  together with simulations results are  presented.
\end{abstract}
\keywords{Association matrix,  averaging effect,  categorical data,  feature selection,  proportional association}
\subjclass{62H20, 62F07, 68T30}

\maketitle

\section{Introduction}

Nominal data are  common in
scientific and engineering research such as  biomedical research, consumer behavior analysis, network analysis, search engine marketing optimization and web mining.
When the population is cross-classified and  there is no natural ordering for  observed outcomes,
association analysis as in \cite{JWH1996} can be performed using  nominal association measures.
Even if categorical variables collected in these studies are ordinal, they are often treated as nominal if the ordering is not of interest, especially when conditional distributions  or some probabilistic properties are of inferential interest.

When the response categorical variable has more than two levels, the principle of   mode-based  or
 distribution-based proportional prediction can be  used to construct nonparametric nominal association measures.
For example,  Goodman-Kruskal \cite{GK1954}, Goodman \cite{Goodman1996,Goodman2000} and
 others proposed some local-to-global association measures towards optimal predictions.
  Both the Monte Carlo and discrete Markov chain methods are
  conceptually based on the proportional associations. Due to its consistency with averaging effect, these methods are well known, see  \cite{Fishman1996},  and  widely used in asset pricing practice \cite{Duffie2005}, financial risk management
\cite{Glasserman2004,TR2009}, bioinformatics, inventory management \cite{Raun1963} and network analysis \cite{BBV2008}.
The proportional associations between variables are probabilistically  intrinsic.
 They reflect the overall  effect of the explanatory variable  on the response variable.
There are quite a few proportional association measures proposed in the literature, see \cite[p.~70]{Lloy1999}. However,
all these measures focus on either  local-to-local (e.~g.~, contingency table analysis) or  global-to-global \cite{GK1954} associations between two categorical variables.
 For some statistical inferences,  commonly used local-to-local (category-to-category) or global-to-global (variable-to-variable) association measures can be sufficiently informative.
 When the population is cross-classified, effective local-to-global (category-to-variable) association measures are necessary to provide a more detailed description or evaluation of the intrinsic dependence structure between two
 categorical variables.   For example, these kinds of measures are of fundamental importance in targeted or group specific risk analysis such as  credit risk migration analysis, clinical diagnosis, or public health management.
 Although some very effective methods have been developed, there exist many challenging problems especially for handling categorical data in pattern recognition and feature selections.
 An excellent review can be found in \cite{MT1999}.

To gain important insights of the underlying dependence structure, we propose a nominal association matrix to measure the probabilistic proportional associations of each category of a response variable with an explanatory variable. Here are a few good excuses to do so:
\begin{itemize}
 \item A high-dimensional data set usually contains many explanatory variables of mixed-types (categorical and numerical); while an executable profile  usually consists of few selected categorical or categorized variables.  The joint distribution of the selected features can be viewed as one categorical variable, and similarly for response variables.
 \item With a multinomial response variable, estimate of the first and second types of errors based on conditional distributions requires a proper local-to-global association measure.
 \item A project's targeting optimization measure with a categorical response variable is usually a function of the response probability distribution. A feature selection algorithm or structural explanation requires a local-to-global association measure.
 \item Rationally categorizing (or discretizing) numerical explanatory variables with both response variables and project's specific objectives requires a local-to-global association vector.
\end{itemize}
   We will show that this matrix estimates expected confusion matrix for multinomial response variables
     when a proportional (conditional Monte-Carlo) prediction is employed. Detailed discussions on proportional prediction can be found
     in  \cite[Section 8]{GK1954}.

The diagonal of the matrix also induces our association vector described  in this article.  Each component of the association vector represents the proportional  dependence degree of a given category of a response variable with a given explanatory variable.
Furthermore, combined with various sets of admissible weight vectors, the diagonal of the propose matrix provides a general scheme of
generating different global-to-global association measures which embraces the seminal Goodman-Kruskal $\tau$ (the GK-$\tau$).
  The weighting scheme in the general class of association measure is now explicit when compared with that of the GK-$\tau$. Furthermore, one can now design various association measures according to  different objectives.
 Consequently, scientists are no longer constrained by the limited number of association measures in the literature that might or might not be informative or optimal for their specialized inferential needs.  For example, association measures designed to capture rare events should be different from those aimed at describing the entire population.
More importantly, any prior knowledge or relevant expert opinions can be formally incorporated into this framework.

A hierarchy of equivalence relations induced by the association matrix and vectors is also presented to  understand
 the utilities  of proposed association measures.
 The hierarchy is of fundamental interest to the structural analysis for multivariate categorical data.
  We also show that  the association vector, the association matrix and any global association measure determined by a weight vector, particularly the GK-$\tau$, are equivalent only
 when the response variable is binary.

 For high dimensional data, the proposed global association measures are also capable of evaluating the corresponding collective effect of a set of explanatory variables on a given response variable in order to remove redundancy. We prove that a dimensionality reduction for high dimensional categorical data with a response variable is theoretically possible by using the proposed association measures.

This paper is organized as follows. Section 2 introduces  the nominal association matrix.  The proposed association vector    and   global association measures associated with the vector are derived in Section 3.  We   present the hierarchy of equivalence relations induced by the association matrix, association vector, and the GK-$\tau$ in section 4.
 Section 5 provides theoretical results  for feature selection in  categorical data.
 We  illustrate the  association matrix and vector by examining two examples in Section 6. Results from simulation studies and real data analysis for feature selection are also presented.  Discussion is  provided in Section 7.
\section{Association Matrix}
\label{Sect21}
 Let $X$ and $Y$ be categorical variables with domains $\Dmn(X)=\{1,2,\dots, n_X\}$ and $\Dmn(Y)=\{1,2,\dots, n_Y\}$, respectively.
To provide a complete description of proportional local-to-global association, we introduce a proportional association matrix as follows:
\begin{definition}  The association matrix is given by
 \begin{equation}\label{gammaytox}
 \gamma(Y|X):=(\gamma^{st}(Y|X)),
 \end{equation}
where
\[ \gamma^{st}(Y|X):=\frac {E[p(Y=s|X)\;p(Y=t|X)]}{p(Y=s)},\]
 where $s,t=1,2,\cdots, n_Y$.
\end{definition}

The  $(s,t)$-entry $\gamma^{st}(Y|X)$ of the association matrix $\gamma(Y|X)$  is
 the  probability  of assigning or predicting   $Y=t$
  when the truth is in fact $Y=s$ given the information from the
  explanatory variable.
   The key idea is inspired  by the proportional prediction
  or equivalently the reduction in predictive error employed by  \cite{GK1954}. However, they only focused on correct predictions and did not differentiate
 various scenarios associated with a wrong prediction.  The association matrix, on the other hand, provides a complete and detailed description in comparison.

This matrix is intrinsic and informative and of both theoretical and practical importance since it can be used to estimate both the first and second type error rate distributions for a proportional prediction.
\begin{remark}\label{rmk:assocMtrx} It can be seen  that the association matrix $\gamma(Y|X)$ has the following properties:
 \begin{enumerate}
    \item[\rm(1)] $\gamma(Y|X)$ is a row stochastic matrix; in general, it is asymmetric.
    \item[\rm(2)] Based on a representative training set, the diagonal entries are exactly the expected accuracy rates of prediction of corresponding categories of the response variable; while the off-diagonal entries of each row are exactly the expected first type error rates of the corresponding category.
    \item[\rm(3)] Based on a representative training set,  the off-diagonal entries of each column are exactly the expected second type error rates of the corresponding category.  Thus the matrix can be regarded as the expected confusion matrix for a proportional prediction model with a binomial or multi-nominal response variable.
    \item[\rm(4)] $\gamma(Y|X)$ draws a complete local-to-global association picture.  There are two extreme cases: when $Y$ is completely determined by $X$,
      then $\gamma(Y|X)$ is the identity matrix of degree $n_Y$; when $Y$ is independent of $X$, $\gamma^{st}(Y|X)=p(Y=t)$ for all $s$ thus the matrix $\gamma(Y|X)$ is an idempotent with all rows identical to the marginal distribution of $Y$.
    \item[\rm(5)] The proportional association matrices $\gamma(Y|X)$ can be any row stochastic matrix between a rank one idempotent and identity matrices.  (The study of the most detailed local-to-global association measure (the association matrices) can be a big topic.)
 \end{enumerate}
\end{remark}

 Some of its further properties will be presented with context in the subsequent three sections.

 \begin{example} To demonstrate the application of
 the matrix and its properties (2) - (4), we present a case test with a data set of which the values of $(X,Y)$ are determined by Table \ref{t1}. In this data set, there is a response variable, $V4$,  from a data set with 120 variables and 24,000 observations.

The response variable has six categories with the following probability distribution:
$$p(Y)=( \;  0.1041,\;    0.3077, \;    0.3060, \;    0.1581, \;    0.1099, \;    0.0143\;).$$
An  explanatory variable $X=V6$ is selected and the association matrix based on the 80\% samples is calculated.
We then use the information   in $X$ and apply the proportional prediction
  to predict the value of the response variable in the remaining 20\% of the original samples.
\begin{table}
\caption{ \label{t1}\small Joint frequency table of the response variable and another variable from a real data set with 24,000 observations.}
\centering
\begin{tabular}{|l|rrrrrr|r|}
\hline $X\setminus Y$&1&2&3&4&5&6&$(X,\cdot)$\\\hline
1&16          &1       &0       &0     &0    &0      &17\\
2&1,199        &1,274    &346     &66    &33   &1       & 2,919\\
3& 640        &2,363    &1,363    &343   &103  &7       & 4,819\\
4& 381        &2,203    &2,646    &949   &402   & 18     &6,599 \\
5&  182       &1,131    &2,038    &1,369  &762   & 55     &5,537\\
6&    79      & 407    &937    & 1,047   &1,286  &206    & 3,962\\
7&    2       &  5      & 14      &20   & 51  & 55     & 147\\\hline
$(\cdot, Y)$& 2,499& 7,384& 7,344&    3,794&   2,637& 342& 24,000\\\hline
\end{tabular}
\end{table}
\vskip 0.1in
The association matrix  based on training (left) and the (proportional prediction based) confusion matrix derived from testing  (right) are provided as follows:
\[\left( \begin{array}{cccccc}
    .26   &  .47   &   .15   &  .06   &  .04   &  .01\\
    .05   &  .48   &   .28   &  .11   &  .07   &  .01\\
    .02   &  .36   &  .34    & .15   &  .11   &  .02\\
    .02   &  .32   &  .35    & .17   &  .12    & .02\\
    .02   &  .30   &  .35    & .18   &  .14    & .03\\
    .03   &  .29   &  .33    & .18    & .15    & .03\\
\end{array}\right) \]
VS\[\left( \begin{array}{cccccc}
    .27  &   .47  &   .16  &   .05 &   .03  &   .01\\
    .05 &    .49  &   .28  &   .10 &     .06 &     .01\\
    .02 &    .36 &    .35 &    .15 &    .10  &   .02\\
    .02  &   .31   &  .36  &   .17 &    .12 &    .03\\
    .02   &  .28 &    .35  &   .17  &   .14 &    .04\\
    .03  &   .27  &   .33  &  .18  &   .15 &    .04\\
\end{array}\right) \]
We see that the confusion matrix by using the 20\% test samples is very close to the association matrix by using the 80\% training samples.
The small difference could be  due to different  sizes of the training and testing sizes and  possible selection bias associated
with pseudo-random numbers generated by the computer.
\end{example}

\section{Association Vector and Global-to-Global Association Scheme}
\label{sec:assVectorAndGlobal}
In this section, we introduce an association vector as an important derivative of the association matrix.  We also construct a class of global-to-global association measures  using  these local-to-global association degrees.   We shall see that the GK-$\tau$ belongs to this class and has hierarchical relations with the local-to-global association matrix.   We shall also show that these global-to-global association measures can be applied to variable selection for high dimensional categorical data.

\subsection{Association Vector}

\begin{definition}
 The {\it  association vector}
 $$\Theta^{Y|X}:=(\theta^{Y=1|X},\theta^{Y=2|X},\dots, \theta^{Y=n_Y|X})$$
 is given by
  \begin{equation}\label{thetaysceOnx}
  \theta^{Y=s|X}:=\frac {\gamma^{ss}(Y|X)-p(Y=s)}{1-p(Y=s)},\qquad s=1,2,\dots, n_Y.
  \end{equation}
\end{definition}

 Thus, $\theta^{(Y=s)|X}$ is the normalization of the diagonal entry $\gamma^{ss}(Y|X)$, which is the expected accuracy lift (equivalently error reduction) rates of the proportional prediction of $Y=s$ using  the information of the explanatory variable $X$ over using only the marginal distribution of $Y$.  One checks that
\begin{equation}\label{gammaAndtheta}
\theta^{Y=s|X}=\frac {E[p(Y=s|X)^2]-p(Y=s)^2}{p(Y=s)\;(1-p(Y=s))}.
\end{equation}

\subsection{Schemes of Global-to-Global Association Measures}

In order to evaluate the overall proportional dependence of a response variable on an explanatory variable,
we define a general class of global-to-global association measures.
 \begin{definition}\label{defnalpha}
 Given a weight vector ${\boldsymbol \alpha}=(\alpha_1, \alpha_2, \dots, \alpha_{n_Y})$ with $\sum_s\alpha_s=1$ and $\alpha_s\geq 0$ for all $s=1,2,\dots, n_Y$, the global association degree is defined as
 $$\tau_{\boldsymbol\alpha}^{Y|X}=\sum_{s=1}^{n_Y}\;\alpha_s\;\theta^{Y=s|X}.$$
 \end{definition}

%%%%{\boldsymbol\alpha}

We call $\tau_{\alpha}^{Y|X}$ the  $\boldsymbol\alpha$-association degree of $Y$ on $X$.  We call a weight vector $\boldsymbol\alpha$ {\it regular} if $\alpha_s> 0$ for all $s=1,2,\dots, n_Y$ when  every single   scenario of $Y$ is considered in the evaluation of the overall
nominal dependence.

The  weight vector provides a practitioner with a useful tool to place a desired emphasis on certain scenarios given different
inferential objectives.  In particular, each component of the association vector can be reproduced by placing $\alpha_s=1$ for a given $s$.
 When dealing with an inference problem, one may choose a set of weights that is relevant and suitable.   Goodman and Kruskal \cite{GK1954} proposed a general framework to chose optimal weights by the introduction of loss functions. Although their purpose is mainly for gaining understanding of some specific global association measures, it is still applicable to our case. In general, one can first define a loss function
$L(\boldsymbol \alpha)$. To obtain a set of weights, we then  derive the optimal weights from the following minimization problem:
\begin{align*}
\min\limits_{{\boldsymbol  \alpha}} L({\boldsymbol\alpha})&\mbox{ subject to } \sum\limits_{s=1}^{n_Y} \alpha_s=1\mbox{ and } \\
&\alpha_s\geq 0, s=1,2,\cdots, n_Y.
\end{align*}

A certain global-to-global association is then a function of any specified loss function. Consequently, it is objective-dependent as it should be. For example, an objective function for rare events should be different from
 that designed for other general inquiries concerning  the majority of the population.
 For demonstration purposes, we shall only consider three association measures using some commonly used weight functions in the section of examples and simulation study.

\begin{theorem}\label{THM1} Assume $\boldsymbol \alpha$ is a regular weight vector.
\begin{enumerate}
\item[\rm(i)]  $0\leq\theta^{(Y=s)|X}\leq 1$ and $0\leq\tau_{\boldsymbol\alpha}^{Y|X}\leq 1$;
\item[\rm(ii)]  $\theta^{Y=s|X}=0\iff \{Y=s\}$ and $\{X=i\}$ are independent for all $i\in\Dmn(X)$; in particular,
$\tau_{\boldsymbol\alpha}^{Y|X}=0\iff Y$ and $X$ are independent;
\item[\rm(iii)]  $\theta^{Y=s|X}=1\iff p(Y=s|X=i)=1$ or $0$, for all
$i\in \Dmn(X)$; in particular, $\tau_{\boldsymbol\alpha}^{Y|X}=1\iff Y$ is completely determined by $X\iff\tau^{Y|X}=1$.
\end{enumerate}
\end{theorem}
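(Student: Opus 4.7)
\smallskip

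The plan is to work throughout from formula (\ref{gammaAndtheta}), which rewrites
\[
\theta^{Y=s|X}=\frac{E[p(Y=s|X)^{2}]-p(Y=s)^{2}}{p(Y=s)\,(1-p(Y=s))},
\]
and then pass to $\tau_{\boldsymbol\alpha}^{Y|X}$ via the convex combination in Definition \ref{defnalpha}. Two elementary facts drive everything: (a) Jensen's inequality applied to the convex map $u\mapsto u^{2}$ against the distribution of $X$ (equivalently the Cauchy--Schwarz identity $E[Z^{2}]\geq (EZ)^{2}$ with equality iff $Z$ is a.s.\ constant), and (b) for any random variable $Z$ taking values in $[0,1]$ one has $Z^{2}\leq Z$, with equality iff $Z\in\{0,1\}$ almost surely. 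Applied to $Z=p(Y=s\mid X)$, whose expectation over $X$ equals $p(Y=s)$, these bounds pin the numerator of $\theta^{Y=s|X}$ to the interval $[0,p(Y=s)(1-p(Y=s))]$.

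For part (i), fact (a) gives $E[p(Y=s|X)^{2}]\geq p(Y=s)^{2}$, so $\theta^{Y=s|X}\geq 0$; fact (b) gives $E[p(Y=s|X)^{2}]\leq E[p(Y=s|X)]=p(Y=s)$, so $\theta^{Y=s|X}\leq 1$. Since $\tau_{\boldsymbol\alpha}^{Y|X}$ is a convex combination of quantities in $[0,1]$, it also lies in $[0,1]$. For part (ii), the equality case of Jensen in (a) says $\theta^{Y=s|X}=0$ iff $p(Y=s\mid X=i)$ does not depend on $i$, which (since its average is $p(Y=s)$) is exactly $p(Y=s\mid X=i)=p(Y=s)$ for every $i\in\Dmn(X)$, i.e.\ independence of the events $\{Y=s\}$ and $\{X=i\}$. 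For part (iii), the equality case of (b) says $\theta^{Y=s|X}=1$ iff $p(Y=s\mid X=i)\in\{0,1\}$ for every $i$.

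The final step, in each of (ii) and (iii), is to upgrade the per-category statement to the variable-level ``in particular'' statement. Here the \emph{regularity} assumption $\alpha_s>0$ is essential: since $\theta^{Y=s|X}\in[0,1]$ and $\sum_s\alpha_s=1$, the identity $\tau_{\boldsymbol\alpha}^{Y|X}=\sum_s\alpha_s\theta^{Y=s|X}$ shows that the sum vanishes (resp.\ equals one) iff every $\theta^{Y=s|X}$ vanishes (resp.\ equals one). Having all $\theta^{Y=s|X}=0$ means every event $\{Y=s\}$ is independent of every $\{X=i\}$, which is exactly the independence of $Y$ and $X$. Having all $\theta^{Y=s|X}=1$ means that for each $i\in\Dmn(X)$ the conditional distribution $p(Y=\cdot\mid X=i)$ is $\{0,1\}$-valued and hence degenerate at a single category, i.e.\ $Y$ is a function of $X$; comparing with property (4) of Remark \ref{rmk:assocMtrx} then yields $\tau^{Y|X}=1$.

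I do not anticipate a genuine obstacle: the argument reduces to two standard one-line inequalities and their equality cases. The only place requiring care is the bookkeeping for the ``in particular'' halves, where one must explicitly invoke regularity of $\boldsymbol\alpha$ to transfer information from a single weighted sum back to each coordinate; without $\alpha_s>0$ for every $s$, the tau-level extremes are compatible with $\theta^{Y=s|X}$ being arbitrary on the zero-weight coordinates, so the variable-level conclusions would fail.
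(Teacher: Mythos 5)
Your proof is correct and follows essentially the same route as the paper: the lower bound is the variance/Jensen inequality $E[p(Y=s|X)^2]\geq p(Y=s)^2$ (which the paper writes out as an explicit sum-of-squares identity in $p(X=j,Y=s)-p(X=j)p(Y=s)$), and the upper bound is $Z^2\leq Z$ for $Z=p(Y=s|X)\in[0,1]$, with the equality cases of these two inequalities giving (ii) and (iii). Your explicit bookkeeping of the regularity of $\boldsymbol\alpha$ for the ``in particular'' claims is a detail the paper compresses into ``the rest follows,'' but it is the intended argument.
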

\begin{proof}
 One checks that
\begin{align}\label{EqA}
 \theta^{(Y=s)|X} =&\frac {\sum_{j\in\Dmn(X)}\frac
 {p(X=j,Y=s)^2}{p(X=j)}-p(Y=s)^2}{p(Y=s)(1-p(Y=s))} \notag\\
    =&\sum_j \frac { \big (p(X=j, Y=s)-p(X=j)p(Y=s)\big )^2}
    {p(X=j)\;p(Y=s)(1-p(Y=s))}\\
    &\geq 0,
\end{align}
and
 \begin{align}\label{EqB}
 E[p(Y=s|X)^2]&=\sum_j
 p(Y=s|X=j)^2\;p(X=j)\notag\\
 &\leq\sum_j p(Y=s|X=j)p(X=j)\\
 &=p(Y=s).\notag
 \end{align}
 Thus $\theta^{(Y=s)|X}\leq 1$.
The rest of \rm(i) to \rm(iii) follows from (\ref{EqA}) and (\ref{EqB}).
\end{proof}

\begin{remark} It follows from the above theorem and Remark \ref{rmk:assocMtrx} that, when $Y$ is completely determined by or independent of $X$, with any regular weight vector $\boldsymbol\alpha$, the association measure $\tau_{\boldsymbol\alpha}^{Y|X}$ completely determines the association vector and matrix.
\end{remark}
We now show that the GK-$\tau$ can be derived from the general association measure.
Recall that the GK-$\tau$ (see \cite{GK1954}) is given by
\begin{equation}\label{eq3}
\tau^{GK}= \frac{  \sum\limits_{i=1}^{n_Y} \sum\limits_{j=1}^{n_X} p(Y=i; X=j)^2/p(X=j)
- \sum\limits_{i=1}^{n_Y} p(Y=i)^2    } { 1-  \sum\limits_{i=1}^{n_Y} p(Y=i)^2  }.
\end{equation}
It is a also normalized conditional Gini concentration, measuring a proportional global-to-global association.
Detailed discussions can be found in \cite{Lloy1999}.

We now show that the GK $(\tau)$ can be derived from the proposed framework by using a specific set of weights.

 \begin{corollary}\label{CorGK}
   If the weight vector is assigned as
 \begin{align}\label{alphaP}
 \begin{split}
 &{\boldsymbol\alpha}^{P}=\frac 1{V_G(Y)}(p(Y=1) - p(Y=1)^2,\dots,\\
&p(Y=n_Y) - p(Y=n_Y)^2);
 \end{split}
 \end{align}
 then
\begin{equation}\label{tauEq}
\tau^{GK} = \tau_{{\boldsymbol\alpha}^{P}}^{Y|X}.
\end{equation}
 \end{corollary}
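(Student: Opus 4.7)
The plan is a direct verification: substitute the proposed weight vector $\boldsymbol\alpha^P$ into Definition~\ref{defnalpha} and manipulate the resulting sum until it matches the closed form (\ref{eq3}) for $\tau^{GK}$. The key observation is that the normalizing factor $V_G(Y)$ is (up to the constraint $\sum_s \alpha_s = 1$) forced to equal $\sum_s p(Y=s)\bigl(1-p(Y=s)\bigr) = 1 - \sum_s p(Y=s)^2$, which is exactly the denominator of $\tau^{GK}$.

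First I would rewrite
\[
\tau_{\boldsymbol\alpha^P}^{Y|X} \;=\; \sum_{s=1}^{n_Y} \alpha_s^P\,\theta^{Y=s|X}
\]
using formula (\ref{gammaAndtheta}) for $\theta^{Y=s|X}$. The crucial cancellation is that $\alpha_s^P$ is proportional to $p(Y=s)(1-p(Y=s))$, which is precisely the denominator appearing in (\ref{gammaAndtheta}). After cancellation, the sum collapses to
\[
\tau_{\boldsymbol\alpha^P}^{Y|X} \;=\; \frac{1}{V_G(Y)}\sum_{s=1}^{n_Y}\Bigl(E[p(Y=s\mid X)^2]-p(Y=s)^2\Bigr).
\]

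Next I would expand the expectation via the identity
\[
E[p(Y=s\mid X)^2] \;=\; \sum_{j=1}^{n_X} p(Y=s\mid X=j)^2\,p(X=j) \;=\; \sum_{j=1}^{n_X}\frac{p(Y=s,X=j)^2}{p(X=j)},
\]
which was already used in deriving (\ref{EqA}). Substituting and swapping the order of summation yields the numerator of (\ref{eq3}) exactly.

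Finally, to complete the identification with (\ref{eq3}) I would verify that $V_G(Y) = 1 - \sum_s p(Y=s)^2$. Since the weights $\alpha_s^P$ must sum to one, the normalizing constant must satisfy
\[
V_G(Y) \;=\; \sum_{s=1}^{n_Y}\bigl(p(Y=s)-p(Y=s)^2\bigr) \;=\; 1-\sum_{s=1}^{n_Y}p(Y=s)^2,
\]
matching the denominator of $\tau^{GK}$. There is no real obstacle here; the proof is a one-line algebraic identification once the cancellation of $p(Y=s)(1-p(Y=s))$ is observed. The only subtlety worth a brief remark is the edge case where some $p(Y=s)\in\{0,1\}$: such categories contribute $\alpha_s^P=0$ and can be omitted without altering either side of (\ref{tauEq}).
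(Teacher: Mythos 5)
Your proof is correct and follows essentially the same route as the paper's: substitute $\boldsymbol\alpha^P$ into the definition of $\tau_{\boldsymbol\alpha}$, cancel the factor $p(Y=s)(1-p(Y=s))$ against the denominator of $\theta^{Y=s|X}$ via (\ref{gammaAndtheta}), expand $E[p(Y=s\mid X)^2]$ as $\sum_j p(Y=s,X=j)^2/p(X=j)$, and identify $V_G(Y)=1-\sum_s p(Y=s)^2$ with the denominator of (\ref{eq3}). Your added remark on degenerate categories with $p(Y=s)\in\{0,1\}$ is a small but welcome refinement the paper omits.
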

 \begin{proof}
Let us consider
\begin{eqnarray*}
RHS &=& \frac{1}{V_G(Y)} \sum\limits_{i=1}^{n_Y}
 p(Y=i) (1 - p(Y=i)) \;\;\theta^{Y=i|X}\\
 &=& \frac{1}{V_G(Y)} \sum\limits_{i=1}^{n_Y}
 \left ( E[ p(Y=i)|X)^2] - p(Y=i)^2   \right )\\
 &=& \frac{1}{V_G(Y)}\left (  \sum\limits_{i=1}^{n_Y} \sum\limits_{j=1}^{n_X}\frac{p(Y=i; X=j)^2}{p(X=j)} -E p(Y)  \right )
\end{eqnarray*}
This completes the proof.
\end{proof}

\section{Hierarchy of Equivalence Relations Induced by Association Measures}

We now show the hierarchy of equivalence relations defined by the association matrix, association vector and
a global association degree (in particular, the GK-$\tau$).
We denote by $\mathcal C$ the set of explanatory variables and present the five equivalence relations on $\mathcal C$ as follows.

\begin{definition}\label{def3} Let $X_1, X_2\in {\mathcal C}$ and $Y$ a given response variable.  With respect to $Y$, the variables $X_1$
and $X_2$ are
\begin{enumerate}
  \item[\ref{def3}.1] {\it E-1 equivalent},
  if  $\tau^{X_1|X_2}=\tau^{X_2|X_1}=\tau^{Y|X_1}=1$;
  \item[\ref{def3}.2]
             {\it E-2 equivalent}, if  $\tau^{Y|X_1}=1=\tau^{Y|X_2}$;
  \item[\ref{def3}.3]
              {\it E-3 equivalent}, if $\gamma(Y|X_1)= \gamma(Y|X_2);$
  \item[\ref{def3}.4]
             E-4 equivalent, if
             $\Theta^{Y|X_1}=\Theta^{Y|X_2}$;
  \item[\ref{def3}.5]
             E-5 equivalent with respect to a weight vector $\alpha$,
             if  $\tau_\alpha^{Y|X_1}=\tau_\alpha^{Y|X_2}$.
\end{enumerate}
\end{definition}

\begin{theorem}\label{THM2}  All the above defined binary relations E-i,
$i=1,2,3,4,5$, are indeed equivalence relations on the set
$\mathcal{C}$. Furthermore, if $X_1$ and $X_2$ are E-i
equivalent (with respect to $Y$), then
             they are E-(i+1) equivalent (with respect to $Y$), for $i=1, 2, 3, 4$.

If  $Y$ is binary, then for any weight vector $\boldsymbol \alpha$,  with respect to $Y$, the E-$i$
equivalence relations are the same for $i=3,4,5$; and
$\theta^{Y=s|X} = \tau_{{\boldsymbol\alpha}}^{Y|X}, s\in\Dmn(Y)$.
\end{theorem}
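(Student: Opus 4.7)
The plan is to verify each of the five relations is an equivalence relation, then establish the chain of implications E-1 $\Rightarrow$ E-2 $\Rightarrow$ E-3 $\Rightarrow$ E-4 $\Rightarrow$ E-5 by unpacking the definitions, and finally show the collapse in the binary case.

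For the equivalence-relation claims, E-2, E-3, E-4, and E-5 are each defined by equality of a quantity attached to $X_i$, so reflexivity, symmetry, and transitivity are immediate. The only subtle case is E-1, whose definition is asymmetric on its face. Here I would invoke Theorem \ref{THM1}(iii): $\tau^{X_1|X_2}=\tau^{X_2|X_1}=1$ means $X_1$ and $X_2$ mutually determine each other, and combined with $\tau^{Y|X_1}=1$ (which says $Y$ is completely determined by $X_1$) this yields $\tau^{Y|X_2}=1$ as well. Thus the definition is in fact symmetric in $X_1$ and $X_2$; reflexivity is trivial, and transitivity follows by composing the "completely determined" relations.

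For the forward chain, E-1 $\Rightarrow$ E-2 is built into the symmetry argument just given. For E-2 $\Rightarrow$ E-3, by Theorem \ref{THM1}(iii) and Remark \ref{rmk:assocMtrx}(4), $\tau^{Y|X_i}=1$ forces $\gamma(Y|X_i)=I_{n_Y}$, so the two association matrices are equal. For E-3 $\Rightarrow$ E-4, equality of the matrices gives equality of diagonals, hence, via the normalization \eqref{thetaysceOnx}, equality of the association vectors. For E-4 $\Rightarrow$ E-5, Definition \ref{defnalpha} realizes $\tau_{\boldsymbol\alpha}^{Y|X}$ as a linear functional of $\Theta^{Y|X}$ whose coefficients depend only on $\boldsymbol\alpha$, so equal association vectors produce equal $\boldsymbol\alpha$-association degrees.

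For the binary case I would first establish the key identity $\theta^{Y=1|X}=\theta^{Y=2|X}$. Using $p(Y=2|X)=1-p(Y=1|X)$, a short expansion shows
\[
E[p(Y=2|X)^2]-p(Y=2)^2 = E[p(Y=1|X)^2]-p(Y=1)^2,
\]
while the denominators $p(Y=s)(1-p(Y=s))$ coincide for $s=1,2$. Hence both components of $\Theta^{Y|X}$ equal a common value $\theta^X$, and for any weight vector with $\alpha_1+\alpha_2=1$ one obtains $\tau_{\boldsymbol\alpha}^{Y|X}=\theta^X=\theta^{Y=s|X}$. This yields the stated identity $\theta^{Y=s|X}=\tau_{\boldsymbol\alpha}^{Y|X}$ and simultaneously gives E-5 $\Rightarrow$ E-4, because each $\theta^{Y=s|X_i}$ is recovered from $\tau_{\boldsymbol\alpha}^{Y|X_i}$. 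For E-4 $\Rightarrow$ E-3, I would exploit the row-stochastic property (Remark \ref{rmk:assocMtrx}(1)): in a $2\times 2$ row-stochastic matrix, the diagonal determines the off-diagonal, so equality of diagonals (equivalently, equality of association vectors) forces equality of the full matrices. Combined with the forward chain E-3 $\Rightarrow$ E-4 $\Rightarrow$ E-5, this closes the loop and shows all three relations coincide.

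The main obstacle is not any single hard computation but the careful bookkeeping in two places: showing E-1 is genuinely symmetric (so that "equivalence relation" is not circular), and carrying out the binary-case calculation that forces $\theta^{Y=1|X}=\theta^{Y=2|X}$. Everything else is a routine application of Theorem \ref{THM1} and Remark \ref{rmk:assocMtrx}.
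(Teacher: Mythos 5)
Your proposal is correct and follows essentially the same route as the paper: E-2 forces $\gamma(Y|X_i)=I_{n_Y}$, the affine relation $\gamma^{ss}(Y|X)=(1-p(Y=s))\,\theta^{Y=s|X}+p(Y=s)$ carries E-3 to E-4, linearity of $\tau_{\boldsymbol\alpha}^{Y|X}$ in $\Theta^{Y|X}$ gives E-5, and in the binary case the identity $\theta^{Y=1|X}=\theta^{Y=2|X}$ together with row-stochasticity of the $2\times 2$ matrix closes the loop --- you simply carry out the computations the paper labels ``routine'' and supply the E-1 symmetry argument the paper omits entirely. One caveat you share with the paper: reflexivity of E-1 and E-2 actually fails on all of $\mathcal{C}$ (a variable $X$ with $\tau^{Y|X}<1$ is not E-1 or E-2 related to itself), so these two are only partial equivalence relations, and your remark that their reflexivity is ``trivial'' glosses over the same defect present in the theorem statement itself.
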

\begin{proof}
 Observe that if $X_1$ and $X_2$ are E-2 equivalent with respect to $Y$, then
$\gamma(Y|X_i)=I_{n_Y}$, the identity matrix of degree $n_Y$, $i=1,2$.

One checks,
\begin{equation}\label{gammaInprob}
\gamma^{st}(Y|X)=\sum_{i\in\Dmn(X)}\frac {p(X=i,Y=s)
 p(X=i, Y=t)}{p(X=i)\;p(Y=s)}.
 \end{equation}
Thus we have
\begin{equation}\label{gammaAndtheta}
\gamma^{ss}(Y|X)=(1-p(Y=s))\;\theta^{Y=s|X}\;+\;p(Y=s).
\end{equation}
Notice that we have
\begin{equation*}
\tau_\alpha^{Y|X}=\sum_{s\in\Dmn(Y)}\alpha_s\theta^{Y=s|X}
\end{equation*}

Now assume $\Dmn(Y)=\{1,2\}$.  It is routine to check that
$$\theta^{Y=s|X}=\tau_{\boldsymbol\alpha}^{Y|X}$$
 for all $s\in\Dmn(Y)$ and any weight vector
$\alpha$.
\begin{align*}
\gamma^{11}(Y|X)&=p(Y=1)+\frac {V_G(Y)\;\tau_{\alpha}^{Y|X}}{2 p(Y=1)},\\
\gamma^{22}(Y|X)&=p(Y=2)+\frac {V_G(Y)\;\tau_{\alpha}^{Y|X}}{2 p(Y=2)};
\end{align*}
by the above argument, we have $\gamma^{11}(Y|X)=\gamma^{22}(Y|X)$.
Notice that $\gamma(y|x)$ is a two-by-two row stochastic matrix.  So the matrix is uniquely determined by $\tau_{\alpha}^{Y|x}$.
\end{proof}

This theorem implies that when $Y$ is binary, $\tau_{\alpha}$ (in particular, the GK-$\tau$) sufficiently captures the nominal association in this case.

\begin{remark}\label{rmk:forTHM2} If $n_Y>2$, for each $i=1,2,3,4$, there exist $X_1$ and $X_2$, which are E-(i+1)
equivalent (with respect to $Y$), but
             not E-i equivalent (with respect to $Y$).
\end{remark}

\parindent 0pt
{\bf  Examples for Remark \ref{rmk:forTHM2}}.

The E-i equivalence relations depend on the choice of response variable $Y$.
We now show by counter-examples that E-i equivalence is strictly stronger than E-(i+1) equivalence.

1. \ref{def3}.2 $\nRightarrow$ \ref{def3}.1.We are given categorical variables $X_1$, $X_2$ and
$Y$ in a given data set $S$.\\
\begin{enumerate}
  \item[a.]
 Consider the following joint distribution:\\
     \begin{tabular}{|l|cccc|}
    \hline
            $Y$ &  1 & 0  &  0 & 1 \\
   \hline
          $X_1$ &  1 & 2  &  3 & 4 \\
          $X_2$ &  2 & 3  &  1 & 2 \\ \hline
     probability &  2/7 &2/7  & 2/7 &1/7 \\   \hline
   \end{tabular}
   \vskip .1in
 we see that $X_1, X_2, Y$ satisfy \ref{def3}.2 but not
   \ref{def3}.1.

2. To see \ref{def3}.3 $\nRightarrow$ \ref{def3}.2, we first notice
   that
   if $X_1, X_2, Y$ satisfy \ref{def3}.2. We then have
   $\gamma^{st}(Y|X_1)=\delta_{st}=\gamma^{st}(Y|X_2)$
     for all $s, t\in\Dmn(Y)$.  There exist $X_1, X_2,
             Y$ satisfying \ref{def3}.3 but not \ref{def3}.2,
             i.~e.~, $(\gamma^{st}(Y|X_1))=(\gamma^{st}(Y|X_2))\neq (\delta_{st})$.

3. To see \ref{def3}.4 $\nRightarrow$ \ref{def3}.3, we
consider the following joint distribution.\\
\vskip .1in
{\tiny \centering
  \begin{tabular}{|l|cccccc|}
    \hline
       $Y$ &  1 & 2  &  2 & 4  & 3 &  4 \\
   \hline
     $X_1$ &  1 & 1  &  2 & 2  & 3 &  3 \\
     $X_2$ &  1 & 3  &  2 & 3  & 1 &  2 \\ \hline
probability &  1/6 &1/6  &1/6 &1/6  &1/6 &1/6\\
     \hline
   \end{tabular}
   }
   \vskip .1in

Then $\gamma^{qq}(Y|X_i)=\frac 12$, for all $q=1,2,3,4$ and $i=1,2$,
 while $\gamma^{12}(Y|X_1)=\frac 12\neq 0 =\gamma^{12}(X_2)$.

Remark 4. To see \ref{def3}.5 $\nRightarrow$ \ref{def3}.4, we
consider the following joint distribution.\\
  \vskip .1in
  {\tiny
  \begin{tabular}{|l|cccccccc|}
    \hline
       $Y$ &  1 &  1 & 2  & 3 &  1 & 2 & 3 & 2\\
   \hline
     $X_1$ &  1 &  1 & 2  & 3 &  4 & 1 & 1 & 4\\
     $X_2$ &  2 &  1 & 1  & 1 &  4 & 1 & 3 & 4\\ \hline
     probability
           &  1/10 &2/10  &1/10 &1/10  &1/10 &2/10&1/10 &1/10 \\
     \hline
   \end{tabular}
   }
   \vskip .1in
Then $\tau^{Y|X_1}=\tau^{Y|X_2}=\frac 9{25}$,
  $(\theta^{Y=1|X_1}, \theta^{Y=2|X_1}, \theta^{Y=3|X_1})=(\frac 16,
\frac {17}{72}, \frac {23}{48})$, \[
(\theta^{Y=1|X_2},
\theta^{Y=2|X_2}, \theta^{Y=3|X_2})=( \frac {17}{72}, \frac 16,
\frac {23}{48}).\]

Observation 1. If we replace E-$2$ equivalence with E-$2'$, where E-$2'$ is defined as ``$X_1$ and $X_2$ are
  E-$2'$ equivalent if $\tau^{X_1|X_2}=1=\tau^{X_2|X_1}$", then
  the stronger-to-weaker chain that
  %$$\text{E-$1$ $\implies$ E-$2'$ $\implies$ E-$3$ $\implies$ E-$4$ $\implies$ E-$5$ (but not vice versa)}$$
  \begin{align}
  &\mbox{E-1}\Rightarrow \mbox{E-2'} \Rightarrow \mbox{E-3} \Rightarrow \mbox{E-4} \Rightarrow \mbox{E-5} \nonumber\\
  &\mbox{(but not vice versa)} \nonumber
  \end{align}
   still holds.

    To see that E-$2'$ implies E-$3$, we note that
   that since $\tau^{X_1|X_2}=1=\tau^{X_2|X_1}$. We then have that
   $|\Dmn(X_1)|=|\Dmn(X_2)|$ and for any event $X_1=i$ there is a unique
   $X_2=j$ such that $p(X_2=j|X_1=i)=1$ and vice versa.  Assume
   that $\Dmn(X_1)=\{i_1,\dots, i_k\}$.  Then
   $\Dmn(X_2)=\{j_1,\dots,j_k\}$ and we may and shall assume that

   %$$p(X_2=j_q|X_1=i_q)=1=p(X_i=i_q|X_2=j_q),\qquad q=1,\dots,k.$$
  \begin{align}
  &p(X_2=j_q|X_1=i_q)=1=p(X_i=i_q|X_2=j_q), \nonumber\\
  &q=1,\dots,k. \nonumber
  \end{align}

   Thus
   \begin{align*}
   \gamma^{st}(Y|X_1)&=\sum_{q=1}^k\frac {p(X_1=i_q,
   Y=s)\;p(X_1=i_q, Y=t)}{p(X_1=i_q)\;p(Y=s)}\\
                     &=\sum_{q=1}^k\frac {p(X_2=j_q,
   Y=s)\;p(X_2=j_q, Y=t)}{p(X_2=j_q)\;p(Y=s)}\\
                     &=\gamma^{st}(Y|X_2).
  \end{align*}
  Thus $X_1$ is E-$3$ equivalent to $X_2$.
  It is easy to see in general E-$3$ equivalence does not imply E-$2'$
  equivalence.
\end{enumerate}

Observation 2. E-1 equivalence condition can be replaced with
  ``if  $\tau_\alpha^{X_1|X_2}=\tau_\alpha^{X_2|X_1}=\tau_\alpha^{Y|X_1}=1$
  for a regular weight vector $\alpha$".

 Actually, $\tau_\alpha^{X_1|X_2}=\tau_\alpha^{X_2|X_1}=\tau_\alpha^{Y|X_1}=1$ if and only if
  $\tau^{X_1|X_2}=\tau^{X_2|X_1}=\tau^{Y|X_1}=1$.
Similarly, E-2 equivalence condition can be replaced with ``if  $\tau_\alpha^{Y|X_1}=1=\tau_\alpha^{Y|X_2}$
for a regular weight vector $\alpha$".

In practice, E-1 equivalence is usually not concerned due to its extreme nature.
Two random variables
 $X_1$ and $X_2$
are E-2 equivalent w.~r.~t.~ $Y$ if and only if $Y$ is completely
determined by $X_1$ or $X_2$; and in this case, there exist hard partitions
$\mbox{Par}(X_i):=\{X_i^s| s\in\Dmn(Y)\}$ of $\Dmn(X_i)$, $i=1, 2$, where
each $X_i^s$ consists of some scenarios of $X_i$, such that
$$\{(a_1, a_2)|a_1\in\Dmn(X_1^s), a_2\in X_2^t\}=\emptyset\mbox{
whenever }s\neq t;$$ while $\{(a_1, a_2)|a_1\in\Dmn(X_1^s), a_2\in
\Dmn(X_2^s)\}\neq\emptyset$ for all $s\in\Dmn(Y)$.
This is because if $a_1\in f_1^{-1}(p)$ then
there exists $a_2\in\Dmn(X_2)$ such that $(a_1, a_2, p)\in
X_1\times X_2\times Y$.  Thus $a_2\in f_2^{-1}(p)$. Hence $S(p,
p)\neq \emptyset$.  By the arbitrariness of $a_1$ in
$f_1^{-1}(p)$, we see that $S(p, q)=\emptyset$, when $p\neq q$.
The above argument also shows that $X_1\times X_2=\cup_{p\in\Dmn(Y)}S(p, p)$.

 \section{Feature Selection}
\label{subsect31}
We consider feature selection problems with and without response variables.

\subsection{With Response Variable}

In this subsection, we consider the existence of a structural basis for a given data set  $\mathrm{S}$ with  categorical variables $V_1, V_2, \dots, V_n$. with   a response variable $Y$.
We now present the variable selection result for high dimensional data when there is a response variable.

\begin{definition}\label{def5p1} Given a set of  explanatory variables $V_1,\dots, V_n$ and a response variable
$Y$. Let $\alpha$ be a regular weight vector.  A subset $(V_{i_1}, \dots, V_{i_k})\subseteq \mathrm{V}$ is called an $\alpha$-association basis for
$Y$ over $\mathrm{S}$ if
\begin{enumerate}
   \item[TB1.] $\tau_{\boldsymbol \alpha}^{Y|(V_{i_1}, \dots, V_{i_k})}=\tau_{\boldsymbol \alpha}^{Y|(V_1, V_2, \dots, V_n)}$;
   \item[TB2.] for any $V\in \{V_{i_1},\dots, V_{i_k}\}$,\\
           $\tau_{\boldsymbol \alpha}^{Y|(\{V_{i_1},\dots, V_{i_k}\}\setminus\{V\})}<\tau_{\boldsymbol \alpha}^{Y|(V_1, \dots, V_n)}$.
\end{enumerate}
\end{definition}

The following theorem is about the existence, construction and
some properties of the structural bases. We remind the reader that the argument in its proof of the inequality (\ref{tag:tauIncrease}) is intrinsic, informative and intriguing.
\begin{theorem}\label{FSTHM}
Let $S$ be a set of  explanatory variables $X_1,\dots, X_n$, $Y$ a response variable, and
$\alpha$ a weight vector.
Then there exists an $\alpha$-association basis $X(i_1,\dots,
i_k):=\{X_{i_1}, X_{i_2}, \dots, X_{i_k}\}$, where $1\leq
i_1<i_2<\dots <i_k\leq n$.
\end{theorem}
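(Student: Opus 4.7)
The plan is to establish a monotonicity property of $\tau_{\boldsymbol\alpha}^{Y|\cdot}$ under refinement of the conditioning variable, and then to construct the basis via backward greedy elimination.

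The central lemma I would prove is: for the joint categorical variable $X=(X_{i_1},\ldots,X_{i_k})$ and any additional explanatory variable $Z$,
\[
\tau_{\boldsymbol\alpha}^{Y|(X,Z)} \;\geq\; \tau_{\boldsymbol\alpha}^{Y|X}.
\]
Since $\tau_{\boldsymbol\alpha}^{Y|\cdot}$ is a non-negative weighted sum of the components $\theta^{Y=s|\cdot}$, and since by equation (\ref{gammaAndtheta}) one has
\[
\theta^{Y=s|X} \;=\; \frac{E[p(Y=s|X)^2] - p(Y=s)^2}{p(Y=s)(1-p(Y=s))},
\]
it suffices to show that $E[p(Y=s|X,Z)^2] \geq E[p(Y=s|X)^2]$ for every $s$. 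This follows from the tower property $p(Y=s|X)=E[p(Y=s|X,Z)\mid X]$ together with Jensen's inequality applied conditionally on $X$, or equivalently the conditional variance decomposition $E[V^2]=E[(E[V\mid X])^2]+E[\Var(V\mid X)]$ with $V=p(Y=s|X,Z)$. This, I expect, is the intrinsic inequality that the authors single out in the theorem statement.

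With monotonicity in hand, I construct the basis iteratively. Set $S_0=\{X_1,\ldots,X_n\}$ and $\tau^\ast=\tau_{\boldsymbol\alpha}^{Y|S_0}$. At stage $j$, if some $V\in S_j$ satisfies $\tau_{\boldsymbol\alpha}^{Y|S_j\setminus\{V\}}=\tau^\ast$, set $S_{j+1}=S_j\setminus\{V\}$; otherwise terminate. The procedure halts in at most $n$ steps because $|S_j|$ strictly decreases, and each step preserves $\tau^\ast$, so TB1 holds for the terminal subset $\{X_{i_1},\ldots,X_{i_k}\}$. For TB2: at termination, no single removal preserves $\tau^\ast$, and monotonicity then forces $\tau_{\boldsymbol\alpha}^{Y|(\{X_{i_1},\ldots,X_{i_k}\}\setminus\{V\})}<\tau^\ast$ for every remaining $V$, which is exactly the strict inequality required.

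The principal obstacle is a clean derivation of the monotonicity inequality. Conceptually it reduces to Jensen's inequality, but some bookkeeping is required to treat the tuple $(X_{i_1},\ldots,X_{i_k})$ as a single categorical variable with well-defined marginal and conditional distributions, and to align the combinatorial formula (\ref{gammaInprob}) for $\gamma^{ss}$ with the probabilistic form $E[p(Y=s|X)^2]/p(Y=s)$. Once this identification is in place, the backward elimination argument is mechanical and the existence statement follows.
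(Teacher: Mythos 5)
Your proposal is correct, and it diverges from the paper's proof in two genuine ways, one in the lemma and one in the construction. For the monotonicity lemma $\tau_{\boldsymbol\alpha}^{Y|(X,Z)}\geq\tau_{\boldsymbol\alpha}^{Y|X}$ you use the tower property plus conditional Jensen (equivalently the conditional variance decomposition), whereas the paper proves the equivalent inequality $\sum_{i,j}p(X=i,Z=j,Y=s)^2/p(X=i,Z=j)\geq\sum_i p(X=i,Y=s)^2/p(X=i)$ by writing the difference as the explicit sum of squares in (\ref{tag:tauIncrease}); these are the same inequality in different clothing, the paper's identity displaying the equality case $p(Y=s|X,Z)=p(Y=s|X)$ directly while yours extracts it from the equality condition in Jensen. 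The larger difference is the selection scheme: the paper runs a forward greedy search (add the variable maximizing the joint $\tau_{\boldsymbol\alpha}$, stop when no single addition increases it) followed by a backward pass removing redundant variables, whereas you run backward elimination from the full set only. As an existence proof your route is cleaner and, strictly speaking, safer: TB1 is an invariant of your procedure because every deletion preserves $\tau^\ast$, and TB2 at termination is precisely the failure of any further deletion, with strictness supplied by monotonicity. The paper's forward phase, read literally, can halt prematurely --- if $Y=X_2\oplus X_3$ for independent fair coins $X_2,X_3$ and an irrelevant $X_1$ is selected first, every pair containing $X_1$ has zero association and the search stops before reaching the fully determining triple, so TB1 would fail without further repair; what the forward phase buys instead is computational feasibility in high dimensions, since one never has to form the domain of the full joint variable $(X_1,\dots,X_n)$, which your backward-only argument implicitly requires. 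The remaining bookkeeping you flag (treating a tuple of variables as a single categorical variable and matching (\ref{gammaInprob}) with $E[p(Y=s|X)^2]/p(Y=s)$) is exactly the identification the paper makes, so no gap remains.
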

\begin{proof}
First of all, we prove that the association is non-decreasing when a new variable is added.  (We remind the reader that this property is no surprise, while its verification is quite informative and intriguing.)  We may and shall assume
that the weight vector $\alpha$ is regular.
For any two variables, say, $X_1,X_2$, we want to prove that $\tau_\alpha^{Y|(X_1,X_2)}\geq \tau_\alpha^{Y|X_1}$; the equality holds if and only if
$p(Y=s|X_1=i, X_2=j)=p(Y=s|X_1=i)$ for all $s\in\Dmn(Y), i\in\Dmn(X_1), j\in\Dmn(X_2)$.

It suffices to show that for each $s\in\Dmn(Y)$, $\theta^{Y=s|(X_1,X_2)}\geq \theta^{Y=s|X_1}$.  Since
$\theta^{Y=s|X}=\frac {\gamma^{ss}(Y|X)-p(Y=s)}{1-p(Y=s)}$, we need only to prove
that for each $s\in \Dmn(Y)$, $p(Y=s)\gamma^{ss}(Y|(X_1, X_2)\geq p(Y=s)\gamma^{ss}(Y|X_1)$, that is,
$$\sum_{i,j}\frac {p(X_1=i, X_2=j, Y=s)^2}{p(X_1=i, X_2=j)}\geq\sum_i\frac {p(X_1=i, Y=s)^2}{p(X_1=i)}.$$  Indeed, we have

\begin{align}
&p(Y=s)(\gamma^{ss}(Y|X_1, X_2)-\gamma^{ss}(Y|X_1))\notag\\
&=\sum_{i,j}\frac{p(X_1=i, X_2=j, Y=s)^2}{p(X_1=i, X_2=j)}- \sum_i\frac{p(X_1=i, Y=s)^2}{p(X_1=i)}\notag\\
&=\sum_{i,j}[p(X_1=i, X_2=j, Y=s)p(X_1=i)\notag\\
&\qquad -p(X_1=i, Y=s)p(X_1=i, X_2=j)]^2\notag\\
&\qquad\quad  [p(X_1=i, X_2=j)p(X_1=i)^2]^{-1}\geq 0.\notag\\
\label{tag:tauIncrease}
\end{align}
The rest follows from the above last inequality.

Now we go to the feature selection process.
Let
\[ D_1:=\{X_h|\tau_\alpha^{Y|X_h}=\max_{1\leq j\leq n}\tau_\alpha^{Y|X_j}\};\]
identify
\[ i_1=\min \aug\{X_k\in D_1||\Dmn(X_k)|=\min_{X_h\in D_1}|\Dmn(X_h)\}.\]
Identify
\[ i_2=\min \aug\{X_h|\tau_\alpha^{Y|(X_{i_1}, X_h)}=\max_{1\leq j\leq n}\tau_\alpha^{Y|(X_{i_1},X_j)}\}.\]
If $\tau_\alpha^{Y|(X_{i_1}, X_{i_2})}=\tau_\alpha^{Y|X_{i_1}}$, then stop the forwarding selection process; otherwise continue in this fashion, until
$(k+1)$st step for which
$$\tau_\alpha^{Y|(X_{i_1}, X_{i_2}, \dots, X_{i_k},X_{i_{k+1}})}=\tau_\alpha^{Y|(X_{i_1},\dots, X_{i_k})}$$
for any $i_{k+1}\in\{1,\dots, n\}\setminus \{i_1,\dots, i_k\}$.

We then delete all potential redundant $X_{i_j}$s among $X_{i_1}, X_{i_2}, \dots, X_{i_k}$, here a variable $X_{i_j}$ is redundant if
$$\tau_\alpha^{Y|(X_{i_1}, X_{i_2}, \dots, X_{i_{j-1}},X_{i_{j+1}}, \dots, X_k)}=\tau_\alpha^{Y|(X_{i_1},\dots, X_{i_k})}.$$
After these backward steps, the remaining $X_{i_h}$s form an $\alpha$-association basis.
\end{proof}

\subsection{Without Response Variable}
\label{sect:EBMD}

In the analysis of high dimensional nominal data, the dimensionality reduction is
one of the most challenging problem. In this section, we prove the existence of bases for multivariate
nominal data.

 Given a data set
$\mathrm{S}$ with categorical variables $V_1, V_2, \dots, V_n$ and
records of realization such as
 $$R_i:=(x_{i1}, x_{i2}, \dots, x_{in}),\qquad i=1,\dots, m.$$

We consider the following two conditions:
\begin{definition}\label{def4p1} A subset $$\{v_{i_1},
V_{i_2},\dots, V_{i_k}\}\subseteq \mathrm{V}$$ is called an $\alpha$-structural basis for $\mathrm{S}$ for a given regular weight vector $\alpha$ if
\begin{enumerate}
   \item[$\alpha$B1.] for each $V_i\in \mathrm{V}$, $\tau_\alpha^{V_i|(V_{i_1}, V_{i_2},
           \dots, V_{i_k})}=1$;
   \item[$\alpha$B2.] for any $v\in \{V_{i_1},\dots, V_{i_k}\}$,\\
           $\tau_\alpha^{V|(\{V_{i_1},\dots, V_{i_k}\}\setminus\{V\})}<1$.
\end{enumerate}
\end{definition}

We now prove the existence of such a base.
\begin{theorem}\label{thmBMVnoY}
Let $S$ be a data set with variables $(X_1,\dots, X_n)$ with samples representative to distribution but without response variables.  Then
there exists and one can find a structural basis $(X_{i_1},\dots, X_{i_k})$, where $k\leq n$, for the multivariate distribution.
\end{theorem}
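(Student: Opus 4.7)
The plan is to realize the $\alpha$-structural basis as the output of a backward-elimination procedure applied to the tautological candidate set $\{X_1, \ldots, X_n\}$. The only analytic ingredients I would need are the equivalence from Theorem \ref{THM1}(iii), namely that $\tau_\alpha^{Y \mid X} = 1$ if and only if $Y$ is completely determined by $X$, together with the monotonicity inequality (\ref{tag:tauIncrease}) established inside the proof of Theorem \ref{FSTHM}.

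First I would observe that $\{X_1, \ldots, X_n\}$ already satisfies $\alpha$B1 trivially. Each coordinate $X_i$ is literally a projection of the tuple $(X_1, \ldots, X_n)$, hence is completely determined by it, and Theorem \ref{THM1}(iii) then delivers $\tau_\alpha^{X_i \mid (X_1, \ldots, X_n)} = 1$ for every $i$ and every regular weight vector $\alpha$. The task therefore reduces to pruning this set down to a minimal subset that still enjoys $\alpha$B1.

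Next I would iterate the following backward step. Starting from $W^{(0)} = \{X_1, \ldots, X_n\}$, at stage $\ell$ I scan $W^{(\ell)}$ and test whether some $X \in W^{(\ell)}$ obeys $\tau_\alpha^{X \mid W^{(\ell)} \setminus \{X\}} = 1$; if so, set $W^{(\ell+1)} = W^{(\ell)} \setminus \{X\}$, otherwise halt and output $W^{(\ell)} = \{X_{i_1}, \ldots, X_{i_k}\}$. Cardinality drops by one at every step and we begin with $n$ elements, so the procedure halts in at most $n$ rounds, and the halting criterion is precisely $\alpha$B2, so that condition is satisfied by construction.

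The crux is to propagate $\alpha$B1 through the backward steps. If $\tau_\alpha^{X \mid W^{(\ell)} \setminus \{X\}} = 1$, then $X$ is a function of $W^{(\ell)} \setminus \{X\}$ by Theorem \ref{THM1}(iii). Substituting that expression into any functional representation of $X_i$ in terms of $W^{(\ell)}$ shows that $X_i$ remains a function of $W^{(\ell+1)}$, whence $\tau_\alpha^{X_i \mid W^{(\ell+1)}} = 1$ by a second application of Theorem \ref{THM1}(iii). By induction $\alpha$B1 is preserved at every stage and in particular at termination. I expect this functional-substitution observation to be the only delicate point: the monotonicity inequality (\ref{tag:tauIncrease}) by itself gives only $\tau_\alpha^{X_i \mid W^{(\ell+1)}} \leq \tau_\alpha^{X_i \mid W^{(\ell)}} = 1$, and the equality case of Theorem \ref{THM1}(iii) is what is needed to upgrade that inequality to equality.
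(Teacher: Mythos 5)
Your proof is correct, but it takes a genuinely different route from the paper's. The paper works with the concentration functional $Ep(Z)=\sum_z p(Z=z)^2$: it first proves the chain $Ep(X)\geq Ep(X,Y)\geq 1/|\Dmn(X,Y)|$, with the first equality holding exactly when $X$ completely determines $Y$, and then runs a \emph{forward} greedy selection (adjoin at each step the variable that most decreases $Ep$ of the accumulated joint, stop when no further decrease is possible, which certifies $\alpha$B1 by the equality case), followed by a backward redundancy-removal pass to secure $\alpha$B2. You instead start from the full set, which satisfies $\alpha$B1 tautologically, and run pure backward elimination, propagating $\alpha$B1 by composing deterministic functional dependences extracted from the equality case of Theorem \ref{THM1}(iii); your halting rule is literally $\alpha$B2. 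Both arguments are sound, and you are right that the monotonicity inequality (\ref{tag:tauIncrease}) points the wrong way for this purpose and that the functional-substitution step carries the real weight. Your version is the more economical existence proof, needing no new inequality beyond Theorem \ref{THM1}(iii). What the paper's $Ep$ machinery buys is a single scalar potential that certifies $\alpha$B1 without testing every variable separately, and a forward algorithm that never has to enumerate $\Dmn(X_1,\dots,X_n)$ --- the paper explicitly notes that the backward strategy you adopt is infeasible in high dimensions and mentions it only in passing, in step (3) of its proof, as the low-dimensional option. One minor caveat for your write-up: ``completely determined'' must be read on the support of the conditioning tuple, i.e.\ $p(X=j\mid W=w)\in\{0,1\}$ only for $w$ with $p(W=w)>0$, so the composition of the two functional dependences is an almost-sure statement; this is harmless but deserves a sentence.
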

\begin{proof}
 (1). We prove that, for any variables $X, Y$ in $S$,
 \begin{equation}\label{Epxy}
 Ep(X)\geq Ep(X,Y)\geq \frac 1{|Dmn(X,Y)|};
 \end{equation}
 the first quality holds if and only if $\theta^{Y=s|X}=1$ for all $s\in Y$; the second equality holds if and only if $(X,Y)$ is uniformly distributed.

 Indeed, we have that
 $$Ep(X, Y)=\sum_{i,s}p(X=i, Y=s)^2\leq$$
 $$\sum_{i,s}p(X=i)p(X=i,Y=s)=Ep(X);$$
 the equality holds if and only if whenever $p(X=i, Y=s)\neq 0$, the equality $p(X=i, Y=s)=p(X=i)$ holds, which amounts to
 $\sum_i\frac {p(X=i, Y=s)^2}{p(X=i)}=1$ for each $s\in\Dmn(Y)$, that is, $\theta^{Y=s|X}=1$ for each $s\in\Dmn(Y)$.

 (2).  It follows from (1) that $$Ep(X_{j_1},\dots, X_{j_m})\geq Ep(X_1,\dots, X_n)$$
 for any variable subset $\{X_{j_1},\dots, X_{j_m}\}$ of $\{X_1,\dots, X_n\}$;
 and the equality holds if and only if $\{X_{j_1},\dots, X_{j_m}\}$ contains an $\alpha$-structural basis for the joint distribution $(X_1,\dots, X_n)$.
Thus if
\begin{equation}\label{maxepx}
Ep(X_{j_1},\dots, X_{j_m})= Ep(X_1,\dots, X_n),
\end{equation}
 then for any $X_i$ we have
$\theta^{X=s|(X_{j_1},\dots, X_{j_m})}=1$ for each $s\in\Dmn(X)$.
Notice that
 $$Ep(X_1,\dots, X_n)\geq \frac 1{|\Dmn(X_1,\dots, X_n)|};$$
 the equality holds when and only when the joint distribution $(X_1,\dots, X_n)$ restricted to $S$ is uniform.

(3).  If $n$ is not large, one may find a structural basis by identifying $\Dmn(X_1,\dots, X_n)$ first and then following a backward redundance-removing process based on (\ref{maxepx}).  If $n$ is large (i.~e.~, a high dimensional case), the identification of  $\Dmn(X_1,\dots, X_n)$ is difficult and backward process is not feasible.  In this case, we will go with a forward selection process to construct a base.

 (4). We now start our construction by forward selection. Pick a variable $X_{i_1}$ with
  $$i_1\in\aug\{X_{i_j}|Ep(X_{i_j})=\min_{1\leq h\leq n}Ep(X_h).$$
  Then pick a variable $X_{i_2}$ with
  $$i_2\in\aug\{X_{i_j}|Ep(X_{i_1},X_{i_j})=\min_{1\leq h\leq n}Ep(X_{i_1}, X_h).$$
Continue this forward selection process until
$$Ep(X_{i_1}, \dots, X_{i_k})= \min_{1\leq h\leq n}Ep(X_{i_1}, \dots, X_{i_k}, X_h).$$
Then remove all potential redundant variables among $X_{i_1}, \dots, X_{i_k}$, where a variable $X_{i_j}$ is redundant if
$$Ep(X_{i_1}, \dots, X_{i_{j-1}}, X_{i_{j+1}}, \dots, X_{i_k})=Ep(X_{i_1}, \dots, X_{i_k}).$$
We assume without loss of generality that there is no redundant variables among $X_{i_1}, \dots, X_{i_k}$.
We claim that $X_{i_1}, \dots, X_{i_k}$ form a structural basis for the distribution.
\end{proof}

  It is worth noting that there can be two or more
$\alpha$-structural bases. But any two such structural bases are E-1
equivalent w.~r.~t.~ the distribution of $S$.  For a high
dimensional (categorical) data set $S$, it can be costly to find a
structure basis $\{V_{i_1},\dots, V_{i_k}\}$ with $k$ smallest. Due
to the following proposition, most of time, $k$ does not need to
be smallest.

\begin{theorem}\label{prop4p8}
Let $S$ be a data set $S$ with variables $V_1,\dots, V_n$, and
$\alpha$ a regular weight vector.

\begin{enumerate}
\item[(1)] Assume that $V(i_1, i_2,\dots, i_k)$ is a subset of
$\mathrm{V}$ in the data set $S$.  If $\{(V_{i_1},\dots,
V_{i_k}\}$ satisfies $\alpha$B1 then $$\tau_\alpha^{V(j_1,\dots,
j_l)|V(i_1,\dots, i_k)}=1$$ for any subset $V(j_1,\dots,
j_l)$ of $\mathrm{V}$.
\item[(2)] If both $V(i_1, i_2,\dots, i_k)$ and $V(j_1, j_2,\dots, j_l)$
are bases for $\mathrm{S}$, then $$|\Dmn(V(i_1,  \dots,
i_k))|=|\Dmn(V(j_1,\dots, j_l))|,$$ which is up bounded by
$$\min(\prod_{s=1}^kn_{V_{i_s}},\prod_{t=1}^ln_{V_{j_t}}).$$

\item[(3)] There exists a smallest $k\leq n$ and a subset $B(i_1,\dots,
i_k):=\{V_{i_1}, V_{i_2}, \dots, V_{i_k}\}$, where $1\leq
i_1<i_2<\dots <i_k\leq n$, such that the structure of the whole
$S$ is completely determined by $B(i_1,\dots, i_k)$, i.~e.~, for
any variable $V_i$, and for any $s\in\Dmn(V_i)$, and joint scenario $(V_{i_1}=t_{i_1}, \dots,
V_{i_k}=t_{i_k})$, $p(V_i=s|V_{i_1}=t_{i_1}, \dots,
V_{i_k}=t_{i_k})$ is equal to either $1$ or $0$.
\end{enumerate}
\end{theorem}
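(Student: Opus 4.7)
The plan is to reduce all three parts to the functional-dependence characterization in Theorem \ref{THM1}(iii): $\tau_\alpha^{Y|X}=1$ if and only if $Y$ is a deterministic function of $X$ on the support, i.e., every conditional $p(Y=s|X=i)$ is $0$ or $1$. This converts $\tau$-identities into purely combinatorial statements about functional dependence, which is the level at which the proofs become transparent and short.

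For (1), I would first apply Theorem \ref{THM1}(iii) to each coordinate variable $V_h$: the hypothesis $\alpha$B1, which says $\tau_\alpha^{V_h|V(i_1,\dots,i_k)}=1$ for every $V_h\in \mathrm{V}$, tells us that each $V_h$ is completely determined by $V(i_1,\dots,i_k)$. Any tuple of functions of $V(i_1,\dots,i_k)$ is itself a function of $V(i_1,\dots,i_k)$, so the composite variable $V(j_1,\dots,j_l)$ is determined by $V(i_1,\dots,i_k)$ as well. Applying Theorem \ref{THM1}(iii) in the converse direction then yields $\tau_\alpha^{V(j_1,\dots,j_l)|V(i_1,\dots,i_k)}=1$.

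For (2), let $A=V(i_1,\dots,i_k)$ and $B=V(j_1,\dots,j_l)$ both be $\alpha$-structural bases. Applying part (1) twice gives $\tau_\alpha^{A|B}=\tau_\alpha^{B|A}=1$, so $A$ and $B$ are E-$2'$ equivalent in the sense of Observation 1 in Section 4. The bijective matching between the positive-probability scenarios of $A$ and those of $B$ produced there yields $|\Dmn(A)|=|\Dmn(B)|$. The upper bound $|\Dmn(A)|\leq \prod_{s=1}^{k}n_{V_{i_s}}$, and the analogous bound for $B$, is immediate from the definition of a joint domain, so the claimed $\min$ bound follows.

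For (3), Theorem \ref{thmBMVnoY} already supplies some $\alpha$-structural basis; choose one of minimum cardinality and call it $B(i_1,\dots,i_k)$. Condition $\alpha$B1 gives $\tau_\alpha^{V_i|B(i_1,\dots,i_k)}=1$ for every $V_i$, and Theorem \ref{THM1}(iii) then forces $p(V_i=s|V_{i_1}=t_{i_1},\dots,V_{i_k}=t_{i_k})\in\{0,1\}$ for every $s$ and every joint scenario, which is exactly the structural determination asserted. The main obstacle I anticipate is in (2): one must carefully distinguish $\Dmn(A)$ as the realized joint domain on $S$ from the nominal Cartesian product $\prod_s\Dmn(V_{i_s})$, since the bijection produced by E-$2'$ equivalence lives on the positive-probability support. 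Once that bookkeeping is set up correctly, the cardinality equality and the upper bound follow directly.
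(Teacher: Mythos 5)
Your proposal is correct and follows essentially the same route as the paper: part (1) via the functional-dependence characterization of $\tau_\alpha=1$ (each component $V_{j_t}$ is a deterministic function of the basis, hence so is the tuple, which is exactly what the paper's maps $\phi_t$, $\psi$, $\chi$ spell out), part (2) via the symmetry $\tau_\alpha^{A|B}=\tau_\alpha^{B|A}=1$ forcing $|\Dmn(A)|=|\Dmn(B)|$ with the obvious product upper bound, and part (3) by exhibiting a minimum-cardinality basis and reading off the $0$--$1$ conditionals from $\alpha$B1. The only presentational difference is in (3), where the paper re-runs the forward-selection and backward-elimination construction rather than citing Theorem \ref{thmBMVnoY} as you do; both are valid and yours is the cleaner reduction.
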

\begin{proof}
(1).  Due to Theorem \ref{THM1}, we may and shall take $\alpha=\alpha^P$.  Assume in $S$ that $V(i_1,\dots, i_k)|_S\subseteq
\mathrm{V}|_S$ satisfies $\alpha$B1, and that $V(j_1, \dots, j_l)$ is a subset of $\mathrm{V}$. Since the condition $\alpha$B1 is
satisfied, we have many-to-one maps $$\phi_t: \Dmn(\{V_{i_1},
\dots, V_{i_k}\})\to\Dmn(V_{j_t}),\qquad 1\leq t\leq l,$$ which is
defined by $$\phi_t(x_{i_1}, \dots, x_{i_k})=x_{j_t}$$$$\text{ if }
p(V_{j_t}=x_{j_t}|V_{i_s}=x_{i_s}, \quad 1\leq s\leq k)\neq 0.$$
These maps are well-defined, for,
$p(V_{j_t}=x_{j_t}|V_{i_s}=x_{i_s}, \quad 1\leq s\leq k)\neq 0$ if
and only if $p(V_{j_t}=x_{j_t}|V_{i_s}=x_{i_s}, \quad 1\leq s\leq
k)=1$.

Define a map
$$\psi:=\Dmn(V_{j_1})\times\Dmn(V_{j_2})\times\dots\times
\Dmn(V_{j_l})$$
$$\to \Dmn(V(j_1, j_2,\dots, j_l))\cup\emptyset$$ by
 \begin{align*}
   &\psi(x_{j_1}, x_{j_2},\dots, x_{j_l})=(x_{j_1}, x_{j_2},\dots, x_{j_l})\\
   &\text{if }p(V_{j_1}=x_{j_1},V_{j_2}=x_{j_2},\dots, V_{j_l}=x_{j_l})\neq 0;\\
   &\psi(x_{j_1}, x_{j_2},\dots, x_{j_l})=\emptyset\\
   &\text{if }p(V_{j_1}=x_{j_1},V_{j_2}=x_{j_2},\dots, V_{j_l}=x_{j_l})= 0.
 \end{align*}
Then $$\chi:=\psi\circ (\phi_1,\phi_2,\dots,\phi_l): \Dmn(V_{i_1},
\dots, V_{i_k})$$$$ \to\Dmn(V_{j_1},  \dots, V_{j_l})=\Dmn(V_{j_1},
\dots, V_{j_l})\cup\emptyset$$ is a many-to-one map (i.e., a
deterministic function) which satisfies that
 \begin{align*}
   &\chi(x_{i_1},\dots, x_{i_k})=(x_{j_1},\dots, x_{j_l}),\\
   &\text{if }p(V_{j_1}=x_{j_1},\dots, V_{j_l}=x_{j_l}|
      V_{i_1}=x_{i_1},\dots, V_{i_k}=x_{i_k})\neq 0;\\
   &\chi(x_{i_1},\dots, x_{i_k})=\emptyset,\\
   &\text{if }p(V_{j_1}=x_{j_1},\dots, V_{j_l}=x_{j_l}|
       V_{i_1}=x_{i_1},\dots, V_{i_k}=x_{i_k})=0.
 \end{align*}
If $(x_{j_1},\dots, x_{j_l})\in\Dmn(V(j_1,\dots,j_l))$ then
$\exists (x_{i_1},\dots, x_{i_p})\in \Dmn(V(i_1, \dots, i_k))$ such that $$A:=p(V_{j_1}=x_{j_1}, \dots,
V_{j_l}=x_{j_l}|V_{i_1}=x_{i_1}, \dots, V_{i_k}=x_{i_k})\neq 0.$$
Thus $$P(r):=p(V_{j_r}=x_{j_r}|V_{i_1}=x_{i_1},\dots,
V_{i_k}=x_{i_k})\neq 0,$$$$ \qquad r=1,\dots, k$$
 and then by $\alpha$B1,
$$P(r)=1, \qquad r=1,\dots, k.$$
 If $A<1$, then there exists an $r_0$ such that $$p(V_{j_{r_0}}\neq x_{j_{r_0}}|V_{i_1}=x_{i_1},\dots,
V_{i_k}=x_{i_k})\neq 0,$$
 implying $P(r_0)<1$, a contradiction.

(2). Consider the corresponding marginal joint distributions
$V(i_1,  \dots, i_k)$ and $V(j_1,\dots, j_l)$.  Since
$|\Dmn(V(i_1, \dots, i_k))|<\infty$, and
$$\tau_{\alpha^P}^{\{V_{i_1}, \dots, V_{i_k}\}|V(j_1, \dots, j_l)}=1= \tau_{\alpha^P}^{V(j_1, \dots, j_l)|\{V_{i_1}, \dots,
V_{i_k}\}},$$ we have that $$|\Dmn(V(i_1, \dots, i_k))|=|\Dmn(V(j_1, \dots, j_l))|,$$ which is
obviously up bounded by
$$\min(\prod_{s=1}^kn_{V_{i_s}},\prod_{t=1}^ln_{V_{j_t}}).$$

(3). There exists a variable, say,
 $V_{j_1}$, such that
  $$Ep(V_{j_1})=\min_{1\leq i\leq n}Ep(V_i).$$
  Let $V_{j_2}$ be a variable such that
  $$Ep((V_{j_1}, V_{j_2}))=\min_{i\neq j_1,1\leq i\leq n}Ep((V_{j_1},V_i)).$$
 Notice that for any $i\neq j_1$,
  \begin{align*}
   Ep((V_{j_1},V_i))
       &\leq \sum_{\eta,\beta}p(V_{j_1}=\eta,V_i=\beta)p(V_{j_1}=\eta)\\
       &= Ep(V_{j_1}),
  \end{align*}
  and the equality holds if and only if whenever
  $p(V_{j_1}=\eta, V_i=\beta)\neq 0$, we have
  $p(V_i=\beta |V_{j_1}=\eta)=1$; in other words, if and only if $V_i$ is completely dependent of $V_{j_1}$, the
  equality holds.  Thus
  $Ep((V_{j_1},V_{j_2}))\leq Ep(V_{j_1})$; and if the equality
  holds then for any $i\neq j_1$, $V_i$ is completely dependent of
  $V_{j_1}$; and in this case, $k=1$ then we are done.  Assume $Ep((V_{j_1},V_{j_2}))<
  Ep(V_{j_1})$.
  Then there exists
  $V_{j_3}\in\{V_1,\dots,V_n\}\setminus\{V_{j_1}, V_{j_2}\}$ such
  that
  $$Ep((V_{j_1}, V_{j_2}, V_{j_3}))=\min_{i\in\{1,\dots,n\}\setminus\{j_1,j_2\}}Ep((V_{j_1},V_{j_2},V_i)).$$
 Similarly we have $Ep((V_{j_1}, V_{j_2}, V_{j_3}))\leq Ep((V_{j_1},
 V_{j_2}))$.  If the equality holds, $k=2$; otherwise, we continue
 until for some $K<n$, $Ep((V_{j_1}, \dots, V_{j_{K+1}}))=
 Ep((V_{j_1},\dots, V_{j_K}))$.

Let $Y=(V_{j_1},\dots, V_{j_K})$, and
 $Y_{\hat{j}}=(V_{j_1},\dots, \hat{V_{j_h}},\dots,
 V_{j_K})$, where\\ $(V_{j_1},\dots, \hat{V_{j_h}},\dots,
 V_{j_K})$ stands for the joint distribution $Y$ with $V_{j_h}$ being removed.  Calculate $\tau_{\alpha^P}^{Y|Y_{\hat{h}}}$, for all $h=1,\dots,
 K$.  If for some $h_1$, $\tau_{\alpha^P}^{Y|Y_{\hat{h_1}}}=1$, then $V_{j_{h_1}}$ is redundant.
 Remove this redundant variable.  In this case, let
 \begin{equation*}
 Y_{\hat{h_1}, \hat{h}}=
  \begin{cases} &(V_{j_1},\dots,
  \hat{V_{j_h}},\dots,\hat{V_{j_{h_1}}}
 V_{j_K}), \mbox{ if }h<h_1,\\
 &(V_{j_1},\dots,
  \hat{V_{j_{h_1}}},\dots,\hat{V_{j_h}}
 V_{j_K}), \mbox{ if }h>h_1.
 \end{cases}
 \end{equation*}
 Calculate $\tau_{\alpha^P}^{Y|Y_{\hat{h_1}, \hat{h}}}$, for all
 $h\in\{1,\dots,\hat{h_1},\dots,
 K\}$.  If for some $h_2$, $\tau_{\alpha^P}^{Y|Y_{\hat{h_1},\hat{h_2}}}=1$, then $V_{j_{h_2}}$ is redundant.
 Remove this redundant variable.  Continue in this fashion to remove all possible redundant variables from
 $V_{j_1},\dots, V_{j_K}$.  Then we obtain a structural base for
 $S$.  In general, there can be more bases for $S$.  Choose a base containing the smallest number of variables.  Then
 this base is the desired one.
\end{proof}

\section{Numerical Analysis and Experiments}
\label{Sect4}

In this section, we present analysis results   of financial data and Canadian Census data together results from a simulation study.

\subsection{Association Analysis of Financial Data}

We first consider a real loan application data set discussed  in Olson and Shi  \cite{OS2007} and Seppanen {\it et al.} \cite{SEPPANEN1999}.
This data set has several variables and 650 records.
For simplicity, we are only concerned about the following five (categorical or discretized) variables: {\it On-Time, Age, Income, Credit} and {\it Risk}, where these  variable were categorized as
On-Time=(No (0), Yes (1)); Age=(young, med, sen); Income = (low, mid, hi); Risk =(low, med, hi); Credit = (red, yellow, green) respectively.
We consider three situations in which {\it On-Time},  {\it Risk}, and {\it Credit}  are used as the response variable respectively.

%%%Description of Data Set:  V1=On-Time   V2=AGE, V3=INCOME, V4=RISK, V5=CREDIT
1. For response $Y=$ {\it On-Time}, we observe that $p(Y)=(0.1, 0.9)$.  Since $Y$ is binary, by Theorem \ref{THM2}, $\tau_{\alpha}^{Y|X}=\tau^{Y|X}=\theta^{Y=i|X}, i=0,1$.\\
\begin{center}
\begin{tabular}{|c|rrrr|}\hline
$X$&Credit &Risk&Age&Income\\\hline
$\tau^{Y|X}$&.0577&.0486&.0402&.0134\\\hline
\end{tabular}
\\
\end{center}

\vskip 0.1in
2. For the response variable $Y=$ {\it Risk}: $p(Y)=(.4877, .0400, .4723)$, we obtain the following results:\\
\begin{center}
\tiny
\begin{tabular}{|l|l|l|l|l|}\hline
X &$\tau^{Y|X}$&$\Theta^{Y|X}$&$\gamma(Y|X)$&(X,Y) freq.\\\hline
On-Time & .0432& (.0451,.0002,.0479)&$\left(\begin{smallmatrix}
.5108 &.0407 &.4485 \\ .4959 &.0402 &.4639 \\ .4631 &.0393 &.4976
\end{smallmatrix}\right)$&$\left(\begin{smallmatrix}11 &2 &52 \\ 306 &24 &255 \end{smallmatrix}\right)$\\\hline
Age &.5137 &(.5451,.0018,.5611) &$\left(\begin{smallmatrix}
.7669 &.0437 &.1894 \\ .5324 &.0417 &.4258 \\ .1956 &.0361 &.7684
\end{smallmatrix}\right)$&$\left(\begin{smallmatrix}13 &9 &246 \\ 291 &17 &61 \\ 13 &0 &0\end{smallmatrix}\right)$\\\hline
Income &.0272 &(.0368,.0207,.0185) &$\left(\begin{smallmatrix}
.5065 &.0345 &.459 \\ .4206 &.0599 &.5195 \\ .4739 &.044 &.4821
\end{smallmatrix}\right)$&$\left(\begin{smallmatrix}19 &8 &45 \\ 211 &17 &209 \\ 87 &1 &53 \end{smallmatrix}\right)$\\\hline
Credit &.0009 & (.0006,.0008,.0012) &$\left(\begin{smallmatrix}
.488 &.0401 &.4719 \\ .4892 &.0408 &.4700 \\ .4872 &.0398 &.4729
\end{smallmatrix}\right)$&$\left(\begin{smallmatrix}35 &2 &40 \\ 98 &9 &93 \\ 184 &15 &174\end{smallmatrix}\right)$\\\hline
\end{tabular}
\end{center}

\vskip 0.1in
3. For the response variable $Y=$ Credit: $p(\cdot, Y)=(.1185, .3077, .5738)$, we obtain the following results:\\

\begin{center}
\tiny
\begin{tabular}{|l|l|l|l|l|}\hline
X &$\tau^{Y|X}$&$\Theta^{Y|X}$&$\gamma(Y|X)$&(X,Y) freq.\\\hline
On-Time & .0319&(.0322, .0123, .0488)&$\left( \begin{smallmatrix}
.1468 &.3328 &.5204 \\
.1281 &.3162 &.5556 \\
.1074 &.2979 &.5946
\end{smallmatrix}\right)$&$\left( \begin{smallmatrix}19 &30 &16 \\
58 &170 &357\end{smallmatrix}\right)$\\\hline
Age &.0035 &(.0099, .0028, .0014)&$\left( \begin{smallmatrix}
.1272 &.3023 &.5705 \\
.1164 &.3096 &.5740 \\
.1178 &.3078 &.5744
\end{smallmatrix}\right)$&$\left( \begin{smallmatrix}40 &80 &148 \\
34 &118 &217 \\
3 &2 &8\end{smallmatrix}\right)$\\\hline
Income &.001 &(.0007, .0006, .0016)&$\left( \begin{smallmatrix}
.1191 &.3085 &.5724 \\
.1188 &.3081 &.5731 \\
.1182 &.3073 &.5745
\end{smallmatrix}\right)$&$\left( \begin{smallmatrix}7 &20 &45 \\
54 &137 &246 \\
16 &43 &82 \end{smallmatrix}\right)$\\\hline
 Risk &.0005 &(.0016,.0003,.0002) &
$\left( \begin{smallmatrix}
.1199 &.3069 &.5733 \\
.1181 &.3079 &.5739 \\
.1183 &.3077 &.5739
\end{smallmatrix}\right)$ & $\left( \begin{smallmatrix}
35 &98 &184 \\
2 &9 &15 \\
40 &93 &174\end{smallmatrix}\right)$\\\hline
\end{tabular}
\end{center}
\vskip 0.1in

The variable {\it Risk} was generated by a seemingly subjective discretization on the ratios of debt over asset and set to reflect the degree of risk of the loan or borrower.  Calculations have shown   that {\it Risk} and {\it Credit} are  almost independent of each other.  Moreover, the variable
    {\it On-Time} is quite lowly associated with each of the two variables.

 Based on the analysis results, we believe that there are two possibilities. Either the credit scoring or the risk assigned is in
poor quality; or even  both are in poor quality. In this case, the continuous variable should be properly rediscretized.  Secondly, the existing categorized Risk is a conventional yet subjective classification of the debt-over-asset ratios.  We can see that this classification is almost irrelevant  for the loan risk management.  There is a comprehensive account of the credit migrations and rating based modeling of credit risk in Trueck and Rachev \cite{TR2009}.  It appears that our association matrix may not only provide a direct estimate of the credit migration matrix but also improve the quality of the credit rating by means of the association measure based discretization of the involved continuous explanatory variables.

\subsection{Feature Selection Analysis of Canadian Census Data}

We now present the analysis result based on {\it The 1996 Survey of of Family
Expenditure} administrated by The Statistics Canada.
The survey was carried out in January, February and March 1997 and refers to calendar year 1996.
It records aggregated data for 10,417 sampling units defined by Statistics Canada.
There are total of  10,922,242 households who were surveyed.
The data set contains records for  location, housing type, characteristics of Reference Person,
characteristics of Spouse of Reference Person,
household description, expenditure items.
The expenditure items are not analyzed since they are numerical and non-categorical.
Location information is also excluded since we are concerned on the general purchase pattern associated
different housing types.

The response variable is naturally the type of dwelling occupied by household at December 31,
1996. With the dwelling type as the response variable, the primary interest of our analysis is on finding the most relevant categorical variables among  some  explanatory variables which describe detailed information about a dwelling and the people who lived there in 1996.

The type of dwelling is divided into six groups: {\it Single-detached house,
Semi-detached house, Row house or Town house, Duplex, Apartment and Other type}.
The categories and corresponding proportions of this response variable are listed in Table \ref{t2}.

Explanatory variables come from two sections: description of a dwelling and characteristics of people who occupied the dwelling in 1996. To be more specific, a dwelling is  described by the following variables:
{\it Number of rooms, Number of bedrooms,  Number of bathrooms, and Class of tenure.}
The variable {\it Class of Tenure} records whether the occupied dwelling is rented or owned with or without mortgage.
The characteristics of the reference person and the spouse of the reference person are described by the following categorical variables:
{\it Marital status, Gender, Age group, Education level, Occupation and Country of birth}.

\begin{table*}[ht]
\caption{\label{t2} \small Distribution of dwelling occupied by Canadian households  in
1996. }
\centering
\vskip 0.05in
\begin{tabular}{cccccc}
 Single-detached & Semi-detached & Row house & Duplex & Apartment & Other\\
\hline \\
  0.5622    &0.0405    &0.0511    &0.0500    &0.2680    &0.0283 \\
  \hline
\end{tabular}
\end{table*}

The next most significant variable excluding {\it Number of Rooms} is the variable {\it Class of Tenure} with the $\tau$
value of 0.3220.
If we consider all accumulative association $\tau^{Y|X_1, X_2}$, the pair using the combination of
 {\it Number of Rooms} and  {\it Class of Tenure} produces the largest association value of 0.4203 which is significantly higher than the associations detected using only one variable.
 If we carry this procedure one step further by considering the association level based on the accumulative effect
 by combining three variables, namely, $\tau^{Y|X_1, X_2, X_3}$, the best combination is given by
   {\it Number of Rooms}, {\it Class of Tenure} and {Occupation of the reference person} with association value
   of $0.4543$. The total number of non-zero triple scenarios  is now 458.
   If we carry out this one more step further by considering four variables, the total number of scenarios the reaches to 2460.
This cast doubts on the reliability of $\tau^{Y|X_1, X_2, X_3, X_4}$ which is approximated 0.57. Therefore, we do not proceed further.

In order to predict the consumer behavior of people who are seeking to own or rent a house, it makes sense to exclude the dwelling information
such as number of rooms and bathrooms since these information are available only after a decision is made.

\begin{table*}[ht]
\caption{\label{t3} \small Distribution of dwelling occupied by Canadian households  in
1996. }
\centering
\vskip 0.05in
\small
\begin{tabular}{c|ccccc}
\hline
 Reference&	Marital &	Age Group	&Education &	Occupation	&Country   \\
	Person & Status		&  Group    &Level		   &            &of Birth     \\					
$\tau$ &	0.1137	&0.0386	&0.007	&0.0248	&0.0178	\\
\hline
\hline
Spouse  &	Age	 Group &Gender	   &Education	&Occupation & Country \\
        &	Group     &         & Level  &    & of Birth \\
 &0.1298	&0.1084	&0.1096	&0.1117 &0.1114\\
\hline
\end{tabular}
\end{table*}

It can be seen from the Table \ref{t3} that the age group of the spouse  has the largest
association level of  $0.1298$. The largest accumulative association based a pair of explanatory variable
is 0.1547 using the age group of the spouse and the occupation of the reference person.
By combining the age group and occupation of the reference person with occupation of the spouse,
the largest association based the triple is 0.2660 which is more than twice of the largest association using one
single variable.
This implies that the occupations of the reference persona and the spouse together with their age groups are
the most important factors in predicting consumer behaviors associated with owning or renting a house.

\subsection{Simulation Study}
\label{subsect43}
Consider a simulated  data set motivated by  a real situation. Suppose that a disease, say, flu, is the concern. Assume
 that there are two types of flues, regular and H1N1 flu.  The response variable takes the value 0 representing the absence of any flu and
 takes values $ 1, 2$ for regular and H1N1 flu respectively.

 Suppose that there are  two types of test that are available. We assume that none of these two  is accurate enough to be conclusive.
 However, we assume that they can be very useful when the results are combined in the sense to be made precise in the following.

For simplicity, these two tests  are assumed to be independent and $P(X_2=1)=P(X_3=1)=1/4$.
The joint distribution of $(Y, X1, X2)$ is given by Table \ref{t4}:
\vskip 0.05in

\begin{table*}[ht]
\small
\caption{\label{t4}}
\centering
\begin{tabular}{|ccccc|}
\hline
$(X_1, X_2)$ &   $P(X1; X2)$ & $ P(Y=0|X_1, X_2)$ & $P(Y=1|X_1; X_2)$ & $P(Y=2|X_1; X_2)$  \\
\hline
(0, 0) & 9/16 &   95\% & 5\% &  0\%\\
(0, 1) & 3/16 &    30\% & 70\% & 0\%\\
(1, 0) & 3/16 &    50\% & 50\% & 0\%\\
(1, 1)&  1/16 &    0\% & 5\% & 95\%\\
\hline
\end{tabular}

\end{table*}

\vskip 0.05in

Based on $X_1$ and $X_2$, we also generate  two {\it less informative} variables, $R_3$ and $R_4$  such that
 $P(R_3=1|X_1=1)=P(R_4=1|X_2=1)=0.90$ and
$P(R_3=0|X_1=1)=P(R_4=0|X_2=1)=0.10$.

These two newly created variables $R_3$ and $R_4$ are indeed redundant if
 $X_1$ or $X_2$ are selected in the analysis.
 Furthermore, we generate another variables based on the joint distribution of
$X_1$ and $X_2$. To be more specific, we have $S_5= I(X_2=1)*I(X_3=1)*Z$ where $Z$ is an independent Bernoulli random variable with
$p=0.8$. The marginal distribution of $Y$ is given by $P(Y=0)=0.6875$, $P(Y=1)=0.2531$ and $P(Y=2)=0.0594$.

Consequently, there are only two effective explanatory variables, $X_1$ and $X_2$, since the rest are all derived from
  based on the joint or marginal distributions of $X_1$ and $X_2$. Intuitively, the derived random variables will suffer certain information loss. However, both $X_1$ and $X_2$ must be used simultaneously for an effective prediction of the response variable
  $Y$ while $S_5$ is more indicative than using $X_1$ or $X_2$ alone.

\begin{table*}
\caption{ \label{t5} Measure of associations using different weighting schemes.}
\centering
\begin{tabular}{|l|ccccc|}
\hline
 & $X_1$ & $X_2$ & $R_3$ & $R_4$ & $S_5$   \\
\hline
 $\tau$(GK) &   0.2382 &   0.1010 &   0.2060  &  0.0878 &   0.1511         \\
  $\tau$(ew) &   0.2221&    0.1206&    0.1923 &   0.1050 &   0.2943     \\
   $\tau$(ipw) &0.1900    &0.1597   & 0.1648  &  0.1393  &  0.5806     \\
 \hline
 \hline
\hline
 & $X_1+R_4$  & $X_2+R_3+S_5$   & $X_1+R_4+S_5$ & $X_1+X_2$ & ALL   \\
\hline
 $\tau$(GK)  &  0.4627   & 0.4669 &   0.4823 &    0.5018 &   0.5018\\
  $\tau$(ew) & 0.5570    & 0.5731  &  0.5884 &   0.6078  &  0.6078\\
   $\tau$(ipw)   &  0.7372 &   0.7940 &   0.8004 &   0.8198 &   0.8199\\
 \hline
\end{tabular}
\end{table*}

 \begin{figure*}[!htbp]
\centering
\includegraphics[scale=0.5]{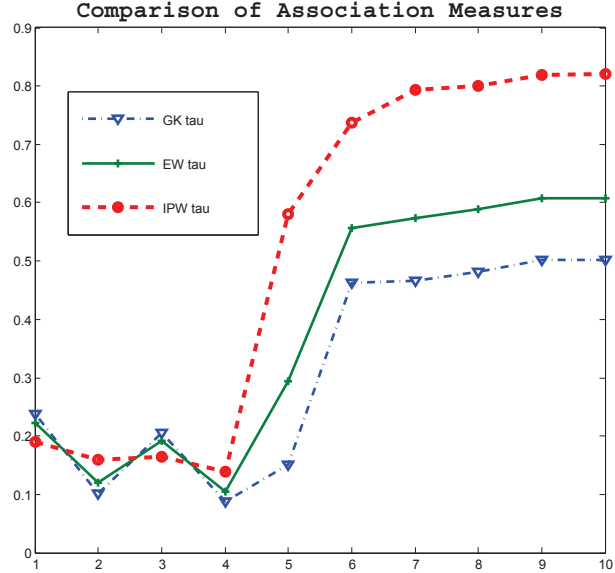}
\caption{Comparison of association values based numerical values in Table \ref{t2}. The order is as follows:
$X_1$, $X_2$, $X_3$, $X_4$, $X_5$,   $X_1+X_4$, $X_2+X_3+X_5$, $X_1+X_4+X_5$,  $X_1+X_2$, all five variables $X1-X5$.}\label{fig:G5}
\end{figure*}

%\begin{center}
%\begin{figure*}[ht]
%%\psfig{figure=ratio.trend.ps,height=6.5in,width=6.0in}
%\includegraphics[height=4in,width=5.0in]{SIM-PIC.eps}
%\caption{\rm\small Comparison of association values based numerical values in Table 2. The order is as follows:
%$X_1$, $X_2$, $X_3$, $X_4$, $X_5$,   $X_1+X_4$, $X_2+X_3+X_5$, $X_1+X_4+X_5$,  $X_1+X_2$, all five variables $X1-X5$.} \label{fig:G5}
%\end{figure*}
%\end{center}

For comparison purposes, we consider three weighting schemes:
\begin{eqnarray}
{\bf \alpha}_k^{GK} &=& \frac{p(Y=k)(1-p(Y=k))}{[\sum\limits_{j=1}^{n_Y} p(Y=j)(1-p(Y=j)]};\\
{\bf \alpha}_k^{ew} &=& 1/n_Y;\\
{\bf \alpha}_k^{ipw} &=& 1/p(Y=k)/[ \sum\limits_{j=1}^{n_Y} 1/p(Y=j) ];
\end{eqnarray}
assuming that $p(Y=k)>0$ for any $k$.
 If this is not the case, a proper re-coding for the categories of
$Y$ will eliminate this trivial case.

It is clear that the first weighting scheme is precisely the one
coinciding with the GK-$\tau$. The second one is a na\"{i}ve equal weighting scheme which is independent of any
probability distribution. The third one is the weighting scheme using the inverse of the individual probabilities.
This is reasonable when rare but severe events are of major concern. The corresponding associations measures are then called
 $\tau(GK)$, $\tau(ew)$, and $\tau(ipw)$ respectively.

 By simulating 100,000 observations, the association measures for all five explanatory variables  are shown in Table \ref{t5} and Figure 1.
  The   variables $R_3$ and $R_4$, which are   less informative and redundant with resect to $X_1$ and $X_2$,  assume smaller association values than their parent variable, $X_1$ and $X_2$, respectively, for all three association measures.
  The inverse probability weighting scheme provides the highest association value for combination of explanatory variables.
 Furthermore, it can be seen that a dimensionality reduction is achieved by using the combination of $X_1, X_2$ by correctly
 eliminating other redundant variables. We also also added many irrelevant explanatory variables and the result remains the same.
 This is consistent with intuition that the inverse probability weighting is more suitable for capturing rare events than
 the GK-$\tau$ which is not sufficiently designed or sensitive to inference of this nature.

We also set the sample size to be 500 and number of iterations for the bootstrap to be 1,000. We study the percentage of association reduction using only $X_1$ and $X_2$ when compared with that from using all five variables. For one particular realization followed by a bootstrap procedure using a stratified sampling scheme, the bootstrapped confidence intervals for the three weighting schemes are $: (95.66\%,   99.88\%),   (98.53\%,  99.94\%), (96.28\%,  99.96\%)$ respectively. The corresponding means are 98.33\%, 99.43\% and 98.97\% respectively. It can be seen that the association using only two most important variables only results in some insignificant information loss.
It takes 205 seconds to calculate the 1,000 bootstrap iterations sampling from 500 observations using
MATLAB.7.10  on  a computer desktop with Intel(R)Core(TM)i7 CPU 870 2.93GHz with 6GB RAM on a 64-bit operation system.

\section{CONCLUDING REMARKS}
\label{Sect5}

  The proposed association matrix with its derived association vector provides generalizes
   some  existing proportional association measures for analyzing categorical data.
    It provides very detailed description of local-to-global association structure that can not be captured by any global-to-global measure in the literature when the response variable is non-binary. The matrix given by training samples gives the expected  confusion matrix when a proportional prediction is employed.

    The association vector and global-to-global association schemes offer local and global association measures
     for both nominal and ordinal categorical data.  The proposed framework provides a general mechanism  to produce various  association measures for different inferential purposes.  The presented framework can also be applied  to high dimensional contingency tables in national surveys.

      The introduction of association matrix presents opportunities  to further studies of
       defining association structures and inference for categorical data.  We are currently investigating the asymptotic properties of the proposed measures and associated structural algebraic classification.  We are also going to work out some classification algorithms with a small sample size and a large dimension based on the E-i equivalence relations.  The derived global-to-global association degrees are expected to play an important role in our subsequent supervised discretization algorithm development.


\begin{thebibliography}{99}


\bibitem{BBV2008}
Barrat, A.~, Barthlemy, M.~ and Vespignani, A.~ (2008).
Dynamical Processes on Complex Networks, Cambridge University Press.

\bibitem{Duffie2005}
Duffie, D. (2005).
Dynamic Asset Pricing Theory, New Age International (P) Limited.


\bibitem{Fishman1996}
Fisher, G.S. (1996).
Monte Carlo: Concepts, Algorithms, and Applications, Springer-Verlag

\bibitem{Glasserman2004}
Glasserman, P. (2004).
Monte Carlo Methods in Financial Engineering (Stochastic Modelling and Applied Probability) (V. 53),
Springer


\bibitem{Goodman1996}
  Goodman, L.A. (1996).
  A Single General Method for the Analysis of Cross-Classified Data: Reconcilliation and
 Synthesis of Some Methods of Pearson, Yule and Fisher, and Also Some Methods of Correpondence
 Analsysis and Association Analysis, The Journal of the American Statistical Association
{\bf 91}, 408-428.

\bibitem{Goodman2000}
  Goodman, L.A. (2000).
 The analysis of cross-classified data: notes on a century of progress in contingency table analysis, and
       some comments on its prehistory and its future,
 Statistics for the 21st Century,
 editors: Rao, C.R.  and  Szekely, G. J., 189-231, Marcel Dekker.

\bibitem{GK1954}
  Goodman, L.A. and   Kruskal, W. H. (1954).
 Measures of Associations for Cross classification,
 The Journal of the American Statistical Association
{\bf 49},  732-764.


\bibitem{JWH1996}
Han, J.~ and Kamber, M.~ (2006).
Data Mining: Concepts and Techniques,
Morgan Kaufmann Publishers Inc.

\bibitem{Lloy1999}
   Lloyd, C. J. (1999).
   Statistical analysis of categorical data,
 John Wiley \& Sons.


\bibitem{MT1999}
  Micheli-Tzanakou, E. (1999).
 Supervised and unsupervised pattern recognition: feature extraction and computational,
 CRC Press.

\bibitem{OS2007}
  Olson, D. and Shi, Y. (2007).
  Introduction to business data mining,
  McGraw-Hill.

\bibitem{Raun1963}
  Raun, D.~L. (1963)
  The Application of Monte Carlo Analysis to an Inventory Problem, The Accouting Review, The American Accountig Association
  {\bf 38}, 754-758.

\bibitem{SEPPANEN1999}
 Seppanen, M. S.,  Kumar S. and   Chandra, C. (2004)
  Process Analysis and Improvement: Tools and Techniques,
  McGraw-Hill Higher Education.

\bibitem{TR2009}
Trueck, S. and Rachev, S.T. (2009)
Rating Based Modeling of Credit Risk: Theory and Application of Migration Matrices, Academic Press.

\end{thebibliography}
\end{document}